\documentclass[twocolumn]{article}

% \pdfminorversion=5 \pdfcompresslevel=9 \pdfobjcompresslevel=6
\usepackage[a4paper, total={6.5in, 9.5in}]{geometry}

\usepackage{cite}
\usepackage{amsmath,amssymb,amsfonts, mathtools}
\usepackage{bbm, xfrac}
\usepackage{algorithm, algorithmic}

\usepackage{graphicx}
\usepackage{tikz}
\usetikzlibrary{matrix,decorations.pathreplacing, calc, positioning,fit, decorations.pathreplacing,calligraphy}

\usepackage{subcaption, textcomp, placeins}

\usepackage{amsthm}
\theoremstyle{plain}
\newtheorem{theorem}{Theorem}
\newtheorem{lemma}{Lemma}

\newtheorem{corollary}{Corollary}[theorem]

\theoremstyle{definition}
\newtheorem*{definition}{Definition}
\theoremstyle{definition}
\newtheorem{assumption}{Assumption}
\theoremstyle{remark}

\def\BibTeX{{\rm B\kern-.05em{\sc i\kern-.025em b}\kern-.08em
    T\kern-.1667em\lower.7ex\hbox{E}\kern-.125emX}}

\usepackage{titlesec, abstract}
\titleformat{\section}{\large\bfseries}{\thesection}{0.5em}{}
\titleformat{\subsection}{\large\itshape}{\thesubsection}{0.5em}{}

\begin{document}
\title{\textbf{Exploiting Sparsity for Localization of Large-Scale Wireless Sensor Networks}}
\author{Shiraz Khan, Inseok Hwang, and James Goppert\thanks{S. Khan, I. Hwang and J. Goppert are with the School of Aeronautics and Astronautics, Purdue University, West Lafayette, IN 47907, USA (e-mail:shiraz@purdue.edu; ihwang@purdue.edu; jgoppert@purdue.edu)}
}
\date{\vspace{-5ex}}

\twocolumn[
  \begin{@twocolumnfalse}
\maketitle
\begin{abstract}
Wireless Sensor Network (WSN) localization refers to the problem of determining the position of each of the agents in a WSN using noisy measurement information. In many cases, such as in distance and bearing-based localization, the measurement model is a nonlinear function of the agents' positions, leading to pairwise interconnections between the agents. As the optimal solution for the WSN localization problem is known to be computationally expensive in these cases, an efficient approximation is desired. In this paper, we show that the inherent sparsity in this problem can be exploited to greatly reduce the computational effort of using an Extended Kalman Filter (EKF) for large-scale WSN localization. In the proposed method, which we call the Low-Bandwidth Extended Kalman Filter (LB-EKF), the measurement information matrix is converted into a banded matrix by relabeling (permuting the order of) the vertices of the graph. Using a combination of theoretical analysis and numerical simulations, it is shown that typical WSN configurations (which can be modeled as random geometric graphs) can be localized in a scalable manner using the proposed LB-EKF approach.

\vspace{4pt}
\textbf{Keywords:} \textit{wireless sensor networks, localization, Kalman filtering, graph theory}
% Lastly, we consider the example of range-based localization to show the efficacy of the proposed LB-EKF approach, through discussions and numerical simulations.
\end{abstract}
\hfill
\end{@twocolumnfalse}
]
\saythanks
\section{Introduction}
\label{sec:introduction}
Wireless Sensor Networks (WSNs) are interconnected agents that sense and navigate their environment, where the interconnections can model various aspects of the WSN, such as the ability of two agents to sense, influence, or communicate with each other. Many real-world systems can be abstracted as WSNs, including groups of autonomous vehicles and fleets of ground-based rovers.
% Thus, WSNs have a wide range of applications ranging from commercial package-delivery to search-and-rescue operations. 
% Each agent in a WSN uses a suite of sensors to safely navigate the environment and fulfill its objectives. 
WSN localization refers to the problem of accurately estimating the position of each of the agents by processing noisy measurement data. 

In many WSN applications, the agents are able to obtain relative measurements between their states, such as in the case of inter-agent distance measurements (e.g., using time difference of arrival) or relative bearing measurements (e.g., using cameras) \cite{iet_loc_distance, iet_relative_bearing, zhao2019bearing}. Moreover, if the agents are equipped with communication capabilities, the received signal strength (RSS) values of the communication links can be used as relative measurements, which is a low-cost approach for WSN localization \cite{RSS2019efficient}. 
% The edges of the graph correspond to the availability of pairwise measurements, or in the case of RSS-based WSN localization, the communication links between agents. 
Relative measurements are especially useful when the agents are in a GNSS-denied area, or when some of them are being subject to GNSS-spoofing attacks. In such cases the relative measurements can serve as redundant, reliable sources of information that can localize the WSN as well as help to detect sensor faults and spoofing attacks. 
Assuming that these relative measurements are available between any two agents which are within some given distance of each other, the WSN can be represented using a geometric graph (also called a disk graph) \cite{bdo_localization2007}. 
The problem of WSN localization on geometric graphs using relative measurements has been shown to be NP-hard \cite{aspnes2006theory}. Thus, much academic effort has gone into developing efficient state estimation algorithms which can solve this problem approximately.

Classical approaches for WSN localization like maximum likelihood estimation require a semidefinite programming relaxation before they can be implemented efficiently \cite{sdp2018}. A drawback of classical state estimation algorithms is that they only track the state estimate, whereas Bayesian estimators (such as those based on Kalman filtering theory) estimate the covariance of the estimation error as well. The error covariance is valuable for control and decision applications, such as checking for sensor faults using a residual test \cite{khan2022robust}. Another advantage of Kalman filter-based approaches is that they have provable convergence guarantees, which is not the case for classical estimators that only guarantee the asymptotic consistency of the estimate \cite{ekf1995, ekf1999}. The Extended Kalman Filter (EKF) and Unscented Kalman Filter (UKF) algorithms are two of the most commonly used Kalman filter-based approaches which can accommodate nonlinear measurement models. 
% While the covariance update of UKF is a `sum of squares' operation, the covariance update of EKF involves matrix multiplication, making them vastly different in their ease of implementation and scalability \cite{2019ekf}. 
While there has been extensive research in the application of EKF to problems having a low-dimensional state vector, e.g., using consensus or diffusion-based strategies \cite{consensus}, these algorithms are not applicable for large-scale WSN localization, which has a high-dimensional state vector. Moreover, estimation algorithms for low-dimensional problems have been made computationally efficient by specializing to geometric graphs \cite{dimakis2006geographic}. On the other hand, EKF-based approaches for WSN localization in the literature do not exploit graph theory \cite{2002ekf, 2019ekf}. The authors of \cite{moura2008distributing} note that the computational complexity of a large-scale implementation of the EKF is governed by the error covariance update step, which involves large matrix inversion operations. The complexity of the covariance update step can be reduced by exploiting the sparsity of the matrices involved in its computation. Specifically, matrices having a small bandwidth (i.e., having all of their non-zero elements close to the main diagonal) are highly efficient for scientific computing and memory access \cite{kavcic2000matrices, liao2016improved_band_matrix_computation, rcm2017}. In \cite{moura2008distributing}, the authors design a Kalman filter for large-scale applications by assuming that the measurement information matrix has a small bandwidth. 
% The authors are able to motivate this assumption by considering the problem of tracking a discretized spatio-temporal process. 
However, the foregoing assumption is not satisfied in general, so the algorithm in \cite{moura2008distributing} is not suitable for large-scale WSN localization.

In this paper, we develop a novel approach for localizing large-scale WSNs in a scalable manner by leveraging the inherent sparsity of the problem. 
In the proposed approach, called the $L$-Banded Extended Kalman Filter (LB-EKF),
the sparsity pattern of the measurement information matrix is modified by relabeling (or reordering) the vertices of the graph. We show that when the graph being relabeled is a geometric graph, there exist vertex relabelings which greatly reduce the bandwidth of the measurement information matrix, thereby reducing the computational complexity of WSN localization; an efficient vertex relabeling algorithm which achieves this bandwidth-reduction is proposed as well. 
% Unlike classical estimation methods, the proposed algorithm tracks the estimation error covariance of each agent, making it better suited for decision and control tasks.
% , because the sparsity of the covariance matrix is related to the sparsity of the graph Laplacian.
% Thus, vertex relabeling is a crucial step in the design of efficient WSN localization algorithms which exploit the inherent sparsity of the problem. 
% Vertex relabeling algorithms in the literature suffer from a variety of drawbacks which make them unsuitable for the WSN localization problem; they either (i) require intermediate steps which are computationally expensive, (ii) do not exploit the characteristics of geometric graphs, and/or (iii) only consider the asymptotic performance of the algorithm
% \cite{relabeling2021, feige2000approximating, embedAndProject}. 
% Furthermore, to our knowledge, the connection between vertex relabeling and the scalability of WSN localization has not been drawn before, making it difficult for control-theorists to discover and exploit this relationship between two seemingly disparate concepts to design efficient large-scale WSN localization algorithms. 
% Furthermore, we develop an approximate centralized EKF using the banded matrix inversion algorithm in \cite{kavcic2000matrices}, called the $L$-Banded Extended Kalman Filter (LB-EKF), as well as a novel bandwidth-reduction approach based on vertex relabeling of geometric graphs.
Furthermore, we derive an upper bound on the bandwidth of typical WSN configurations, which are modeled as random geometric graphs \cite{2010randomWSN}.
% Using the proposed vertex relabeling approach, the computational burden of the LB-EKF algorithm can also be shared among the agents of the WSN to yield a distributed algorithm \cite{moura2008distributing}.
Through a numerical example of distance-based WSN localization, it is shown that the vertex relabeling step makes LB-EKF efficient and scalable, while retaining the performance characteristics of the EKF algorithm.

The rest of the paper is organized as follows; Section \ref{sec:problem} introduces the mathematical description of the WSN. Section \ref{sec:ekf} develops the EKF-based solution to the WSN localization problem. In Sections \ref{sec:matrix_bandwidth} and \ref{sec:graph_bandwidth}, the concepts of matrix and graph bandwidths are introduced, and it is shown how they can be exploited to make the EKF algorithm scalable. Sections \ref{sec:vr_gg} and \ref{sec:RandomGraphs} establish bounds on the performance of the proposed vertex relabeling approach on geometric graphs and random geometric graphs, respectively. Finally, Section \ref{sec:simulations} uses numerical simulations to validate the analysis in the preceding sections.

\textit{Notation:} Throughout the paper, $|\cdot|$ denotes the absolute value of a number as well as the cardinality of a set. The Euclidean norm is denoted by $\|\cdot\|$. $I_n$ is the $n\times n$ identity matrix and $\bold 0$ denotes a matrix of zeros (of appropriate dimensions). The symbols `$\times$' and `$\otimes$' denote the Cartesian and Kronecker products, respectively. $\textit{Po}(\lambda)$ denotes the Poisson distribution with the rate parameter $\lambda$.
% $A_{ij}$ denotes the $i,j^{th}$ entry of a matrix $A=[A_{ij}]$.

\section{Problem Setup}
\label{sec:problem}
% In this section, the WSN localization problem is formulated and the key definitions and assumptions are introduced.

\subsection{Topology of the WSN}
Consider an undirected graph $\mathcal G \coloneqq (\mathcal V, \mathcal{E})$, where the set of vertices denoted by $\mathcal{V}$ represents the set of agents to be localized, and the set of edges $\mathcal E \subseteq \mathcal V \times \mathcal V$ represents the availability of pairwise measurements between them at timestep $k$, such as Euclidean distance measurements or relative bearing measurements. It is assumed that the vertices are ordered in an arbitrary way, such that $\mathcal V = \{1, 2, \dots , |\mathcal V|\}$. We call this a \textit{labeling} of the vertices.

Let the set of neighboring vertices of the $i^{th}$ vertex be $\mathcal N_i$. The network topology can be represented in matrix form using the \textit{Laplacian} of the graph, which is the matrix $\mathcal L = [\mathcal L_{ij}]$ defined as
\begin{equation}
\mathcal  L_{ij} = 
\left \{
\begin{aligned}
\sum_{k\in \mathcal N_i} 1&, \quad &  i=j\\
-1&,  & (i,j)\in \mathcal E \\
0&, & \textrm{otherwise}
\label{eq:laplacian}
\end{aligned}
\right.
\end{equation}

A \textit{realization} of $\mathcal G$ in $\mathbb R^d$ refers to an assignment of a vector in $\mathbb R^d$, called the position, to each of its vertices. The set of positions is denoted as
\begin{equation}
    \mathcal X \coloneqq \left\lbrace x_i(k)\ \big\vert\ x_i(k) \in \mathbb R^d,\ i\in \mathcal V \right\rbrace
\end{equation}
where the $i^{th}$ vertex is assigned the position $x_i(k)$. It is assumed that $\mathcal X$ is ordered similarly to $\mathcal V$, such that the WSN can be unambiguously represented by the realization $(\mathcal G, \mathcal X)$. For notational convenience, the timestep $k$ is omitted when referring to the edge set $\mathcal E$ and the position set $\mathcal X$, which may be time-varying due to the motion of the agents. 

A \textit{geometric graph} (in dimension $d$) is a graph which has a realization satisfying the following condition:
\begin{equation}
\|x_i(k)-x_j(k)\|\leq r \iff (i,j) \in \mathcal E
\label{eq:geom}
\end{equation}
for some $r \in (0, \infty)$.
\begin{assumption}[Limited Sensing Radius]
There exists some $r\in (0, \infty)$ such that the tuple $(\mathcal G, \mathcal X, r)$ satisfies (\ref{eq:geom}), i.e., a pairwise measurement is available between any two agents if and only if they are within some distance $r$ of each other. Consequently, $\mathcal G$ is a geometric graph.
\label{ass:geom}
\end{assumption}
Given Assumption \ref{ass:geom}, we refer to the corresponding distance $r$ as the \textit{sensing radius} of the WSN. Thus, geometric graphs can model the aspect of limited sensing/communication range in WSN applications; in the presence of obstructions in the environment, $\mathcal G$ may also be modeled as the subgraph of a geometric graph.
%An example of a graph that is not a geometric graph (in dimension 2) is a star graph of degree 7. (consider the sphere packing problem with Rs = 1/2

\subsection{Motion and Measurement Models}
\noindent
Define $x(k) \coloneqq [x_1(k)^\top x_2(k)^\top \dots x_{|\mathcal V|}(k)^\top]^\top$ as the concatenated vector of positions of the agents. The motion model of each agent is modeled as a stochastic single-integrator system, for which the discretized update model at timestep $k$, at agent $i$, is
\begin{equation}
    x_i(k+1) = x_i (k) + v_i(k) + p_i(k)
    \label{eq:system}
\end{equation}
where $v_i(k)$ is a known input, $p_i(k)$ is sampled from a zero-mean additive white Gaussian noise process with covariance $\sigma _p I_d$ and $x(0)$ is given by the initial conditions. The vector $v_i(k)$ is denoted as such because it represents the agent's velocity integrated over the discretization period. The noise $p_i(k)$ can capture a variety of uncertainties in the knowledge of $v_i(k)$, as well as environmental disturbances such as wind gusts. 
The single-integrator system model is commonly used in this literature \cite{observability2015, zhao2019bearing}, as in many real world applications, an informative estimate of the velocity can be obtained using intrinsic sensors with high sampling rates, such as Inertial Measurement Units (IMUs), so that the estimation of the velocity may be decoupled from the estimation of the position in order to keep the notation and analysis concise.

In order to model nonlinear measurements between the agents, such as inter-agent distance, received signal strength (RSS) and/or relative bearing measurements, we introduce the function $\phi : \mathbb R^d \times \mathbb R^d \rightarrow \mathbb R^m$, which is assumed to be differentiable almost everywhere. For instance, in the case of distance measurements, we can define $\phi(x_i(k), x_j(k)) \coloneqq \|x_i(k) - x_j(k)\|$. 
% By considering the example of range measurements, it can be noted that for all $(i,j) \in \mathcal E$, $\phi(x_i(k), x_j(k))$ may be invariant with respect to rigid motions of $R^d$. This is because rigid motions such as translations and rotations of the entire network preserve the pairwise distances between the agents. Thus, i
In order to ensure that the collective state vector $x(k)$ is observable, we assume that a subset of the vertices $\mathcal B \subseteq \mathcal V$, referred to as the set of \textit{beacons}, can observe their own state (e.g., using GPS measurements). Without loss of generality, let these be the first $|\mathcal B|$ vertices in $\mathcal V$.

The measurement model of the agents can be collectively written as
\begin{align}
    y(k) &= h(x(k)) + \begin{bmatrix}q(k)\\ r(k)\end{bmatrix}
    \label{eq:measurement}
\end{align}
\begin{center}
\begin{tikzpicture}[>=stealth,thick,baseline]
    
    \matrix [matrix of math nodes,left delimiter={[},right delimiter={]}](A){ 
    x_1(k)\\ x_2(k)\\ \vdots\\ x_{|\mathcal B|}(k)\\ \vdots \\ \phi(x_i(k),x_j(k))\\ \vdots \\
   };
    
    \begin{scope}[
    execute at begin node = $,
    execute at end node = $,
    ]
    \node[
     fit=(A-4-1)(A-4-1),
     inner xsep=28pt,inner ysep=0,
     label=left: h(x(k)) \coloneqq
    ](L) {};
    
  \node[
     fit=(A-6-1)(A-6-1),
     inner xsep=22pt,inner ysep=0,
     label={right, text=darkgray}: {(i,j) \in \mathcal E}
    ](L) {};
    \end{scope}
    
%   \node[
%      fit=(A-6-4)(A-6-4),
%      inner xsep=20pt,inner ysep=20pt,
%      label=below: $j$-ième colonne
%      ](C) {};

    \draw[->, darkgray](L.east)-- ([xshift=12pt]A-6-1.east);
    
    % \begin{scope}[decoration={square right brace, amplitude=6pt,raise=2pt}]
    % \draw[decorate,thick] (A-5-1.north east) -- (A-6-1.south east) node[midway,xshift=15pt]{text};
    % \end{scope}
    \draw[shorten >=2pt, shorten <=2pt, darkgray] ([xshift=81.5pt, yshift=-1pt]A-1-1.north east) --++(0:3mm)|-([xshift=78.5pt, yshift=3pt]A-4-1.south east) node[pos=.25, rotate=270, above]{\small $d|\mathcal B|$ entries};
    
        % \draw[shorten >=2pt, shorten <=2pt] ([xshift=70pt]A-5-1.north east) --++(0:3mm)|-([xshift=87pt]A-7-1.south east) node[pos=.25, rotate=270, above]{$m|\mathcal E|$ entries};
    \draw[shorten >=2pt, shorten <=2pt, darkgray] ([xshift=92pt, yshift=-2pt]A-5-1.north east) --++(0:3mm)|-([xshift=92pt, yshift=0pt]A-7-1.south east) node[pos=.25, rotate=270, above]{\small $m|\mathcal E|$ entries};

\end{tikzpicture}
\end{center}
where $q(k)\in \mathbb R^{d|\mathcal B|}$ and $r(k)\in \mathbb R^{m|\mathcal E|}$ are the measurement noise vectors, which are sampled from independent zero-mean white Gaussian noise processes having covariances $\sigma_q I_{d|\mathcal B|}$ and $\sigma_r I_{m|\mathcal E|}$, respectively. The existence of one or more beacons (sometimes called \textit{anchors}) is a standard assumption in literature. For example, a necessary condition for observability in the case of distance-based localization in 2 dimensions (i.e., $d=2$) is known to be $|\mathcal B|\geq 2$, as this `pins' the estimates of two of the agents, preventing any continuous rigid motions of the entire network \cite{observability2015}. More generally, we make the following assumption.

\begin{assumption}[Observability]
The system (\ref{eq:system}), (\ref{eq:measurement}) satisfies the observability rank condition given in \cite{ekf1995} on some dense set $\mathcal O \subseteq R^{d|\mathcal V|}$ containing $x(0)$ in its interior. Concisely, this states that for all points $x'\in \mathcal O$, the observability matrix constructed using the Jacobian of $h(\cdot)$ at $x'$ is full rank.
\label{ass:observability}
\end{assumption}
It should be noted that in certain applications, Assumption \ref{ass:observability} can be checked in a fast, scalable manner. For instance, in distance or bearing-based localization in 2-dimensions, Assumption \ref{ass:observability} simplifies to purely combinatoric properties of the graph (called as the \textit{infinitesimal rigidity} condition in the literature) which can be checked in polynomial time (with respect to $|\mathcal V|$) \cite{observability2015, zhao2019bearing}.

\section{Extended Kalman Filter (EKF)}
\label{sec:ekf}
The Extended Kalman Filter (EKF) is a widely-used state estimation algorithm which is based on an improvement upon the linearized Kalman Filter, and is able to accomodate nonlinearities in the state and/or measurement models. 
% In turn, exponential stability implies other robustness properties, such as resilience against a class of exogenous disturbances \red{cite}.
% \item Since the EKF estimate and covariance are computed using matrix multiplication, it is suitable for parallelization and distributed applications. Our analysis in Section \ref{sec:efficient} exploits the structure of the matrices to achieve further scalability. 
Moreover, the EKF achieves the Cram\'er-Rao lower bound of the localization problem if it is linearized about the true trajectories of the agents \cite{ekfCRLB}. Since the true trajectories are not available in practice, the EKF is instead linearized about the estimated positions. Thus, its performance is asymptotically close to the lower bound. 
Given Assumption \ref{ass:observability} and some mild regularity conditions, it can be shown that the EKF estimate (almost surely) converges exponentially fast to a neighborhood of this asymptotic value \cite{ekf1999}.
In the WSN localization problem, the asymptotic error is dictated by the linearization error of the measurement model (\ref{eq:H}), which in turn depends on the second-order term in the Taylor series expansion of $\phi(\cdot, \cdot)$.

In order to implement the centralized EKF, the collective measurement function $h(x(k))$ must be linearized. Let the Kalman filter estimate at timestep $k$ be denoted $\hat x (k)$, such that its $i^{th}$ subvector $\hat x_i(k)$ is the agent $i$'s position estimate. 
% From here on, we omit the timestep index $k$ whenever possible, for notational brevity. 
Let the Jacobian of $h(x(k))$ computed at the current estimate $\hat x(k)$ be denoted as $H_k$, such that
\begin{align}
    H_k &\coloneqq \frac{\delta h}{\delta x}\Big\rvert_{\hat x(k)}\\[5pt]
    &= \begin{bmatrix}
    I_{d|\mathcal B|} \qquad \qquad \qquad \qquad \bold 0 \vspace{3pt}
    \\
    \vspace{3pt}
 \begin{bmatrix}
    & & & \vdots &  &  \\
    \dots & 0 & \frac{\delta \phi}{\delta x_i}\big\rvert_{\hat x (k)} & 0 & \dots & \frac{\delta \phi}{\delta x_j}\big\rvert_{\hat x(k)} & \dots \\
     & &  & \vdots  & & 
    \end{bmatrix}
    \end{bmatrix}
\label{eq:H}
\end{align}
where the submatrix is a sparse (mostly zeros) matrix of dimensions $m|\mathcal E| \times d |\mathcal V|$ (which is a consequence of the fact that measurements are only available between agents that are within each other's sensing radius). The measurement information matrix $S_k$ is defined as $S_k = H_k^\top R_k^{-1}H_k$, where
\begin{equation}
    R_k = \begin{bmatrix} 
    \sigma_q I_{d|\mathcal B|} & \bold 0\\
    \bold 0 & \sigma _r I_{m|\mathcal E|}
    \end{bmatrix}
\end{equation}
is the covariance of the measurement noise.
Using (\ref{eq:H}), we see that $S$ is a block diagonal matrix of the form
\begin{align}
    S_k = \begin{bmatrix}
    S^{(1)}_{k} & \bold 0 \\
    \bold 0 & S^{(2)}_{k}
    \end{bmatrix}
    \label{eq:Iblock}
\end{align}
where $S^{(1)}_{k} =\tfrac{1}{\sigma _q} I_{d |\mathcal B|}$ is a diagonal matrix and $S^{(2)}_{k}$ is a $d|\mathcal V| \times d|\mathcal V|$ sparse symmetric matrix whose sparsity pattern is further explored in Sec. \ref{sec:efficient}. 

The initial estimate of the EKF $\hat x(0)$ must be chosen to be sufficiently close to $x(0)$ to ensure that the algorithm converges \cite{ekf1999}. 
% Often, fast approximate algorithms like ordinary least squares are used to initialize the EKF. 
The estimate of the estimation error covariance at timestep $k$ is denoted as $M_k$, which can be initialized as
\[
M_0\coloneqq \mathbb E \left[\Big(\hat x(0)-x(0)\Big)\Big(\hat x(0)-x(0)\Big)^\top \right]\]
$M_k$ is updated at each timestep $k\geq 1$, using
\begin{align}
P_{k} &= M_{k-1} + \sigma_p I_{d|\mathcal V|}
\label{eq:P_update}\\
M_{k} &= \big(P_{k}^{-1} + S_{k}\big)^{-1}
\label{eq:M_update}
\end{align}
which are the same as the estimation error covariance update equations of the Kalman filter, with the key distinction being that the measurement information matrix $S_k$ depends on $\hat x(k)$, so that $M_k$ is only an estimate of, and not the true estimation error covariance of the EKF.
% The centralized EKF estimator in information form \cite{ekf1995}\cite{moura2008distributing}
% %for system (\ref{eq:system}), (\ref{eq:measurement}) 
% is implemented by transforming the Kalman filter quantities as
% \begin{align}
%   F_k &= M_k^{-1} \label{eq:m_inv}\\
%   \hat z(k) &= F_k \hat x(k)
% \end{align}
% The transformed state estimate $\hat z(k)$ is updated as per the recursion
% \begin{align}
%     \hat z(k+1) = \hat z(k) + F_k v(k) + H_k^\top R_k^{-1} \Big(y(k) - h\big(\hat x(k)\big)\Big)
% \end{align}
% where $v(k) \coloneqq [v_1(k)^\top\ v_2(k)^\top\ \dots]^\top$ is the concatenated velocity vector of the agents. Finally, $F_k$ is updated as
% \begin{align}
% P_k = M_k + \sigma_p I_d \otimes I_{|\mathcal V|}
% M_{k+1} = \big(P_k^{-1} + S_k\big)^{-1}
% \label{eq:M_update}
% \end{align}
The estimate of the agents' positions $\hat x(k)$ is updated as per the recursion
\begin{align}
    \hat x(k+1) = \hat x(k) + v(k) + M_k H_k^\top R_k^{-1} \Big(y(k) - h\big(\hat x(k)\big)\Big)
    \label{eq:x_update}
\end{align}
where $v(k) \coloneqq [v_1(k)^\top\ v_2(k)^\top\ \dots\ v_{|\mathcal V|}(k)^\top]^\top$ is the concatenated velocity vector of the agents.

\section{Exploiting Sparsity through Vertex Relabeling}
\label{sec:efficient}
We see that the centralized EKF requires the inversion of large matrices, in (\ref{eq:M_update}). Even when $M_0$ is block diagonal, $P_k$ and $M_k$ are not block diagonal in general. This is because the pairwise measurements $\phi(\cdot, \cdot)$ cause the agents' estimation errors to be coupled, leading to non-zero off-diagonal blocks in the covariance matrices. The complexity of inverting the matrices in (\ref{eq:P_update}) and (\ref{eq:M_update}) using general matrix inversion algorithms is $O($\small$(d|\mathcal V|)$\normalsize$^3)$ \cite{cormen2022introduction}. Thus, it is of interest to explore the sparsity patterns of the matrix sequences $\lbrace P_k \rbrace$ and $\lbrace M_k \rbrace$ to determine when and how they can be inverted more efficiently for large-scale implementations.

\subsection{Approximation of $\lbrace M_k \rbrace$ as \\Banded Matrices}
\label{sec:matrix_bandwidth}
\noindent
The bandwidth of a matrix is defined as follows \cite{bandwidthsSurvey2014}.
\begin{definition}[Bandwidth of a matrix]
The bandwidth of an $n\times n$ matrix $A$ is
% \begin{equation}
% \textit{Bandwidth}(M) = \max_{1< i\leq n} \left( i - \min\left\lbrace  j \big\vert M_{ij} \neq 0 \right\rbrace  \right)
% \label{def:band}
% \end{equation}
\begin{equation}
\textit{Bandwidth}(A) = \max_{A_{ij}\neq 0} \left( |i - j|  \right)
\label{def:band}
\end{equation}
\end{definition}
\noindent
Note that $\textit{Bandwidth(A)}$ counts how far the non-zero entries of $A$ are from its main diagonal, as
\begin{equation}
    |i-j| > \textit{Bandwidth}(A) \Rightarrow A_{ij}=0
\end{equation}
When the matrix $A$ has a small bandwidth, we say that it is a \textit{banded matrix}.
The complexity of many matrix operations reduces considerably when restricted to symmetric banded matrices \cite{kavcic2000matrices, liao2016improved_band_matrix_computation}, which are also faster to store and access from memory \cite{rcm2017}.
% In particular, the inverse of a banded matrix can be efficiently approximated by another matrix having a small bandwidth, where the approximation is optimal in the sense of minimal information loss \cite{kavcic2000matrices}. 
% It is even possible to compute this inversion in a distributed manner, with provable convergence guarantees \cite{moura2008distributing}.

To ensure that the matrices in $\lbrace P_k\rbrace$ and $\lbrace M_k\rbrace$ are banded matrices, we need the following assumption.
\begin{assumption}
The initial estimates of the agents' positions are uncorrelated, i.e., $M_0$ is a block diagonal matrix.
\label{ass:init_cov}
\end{assumption}

Given Assumption \ref{ass:init_cov}, $M_0$ is a banded matrix with the bandwidth $d-1$. To proceed by induction, suppose that $M_{k-1}$ is a banded matrix.
From (\ref{eq:P_update}), it is observed that $P_k$ has the same sparsity pattern as $M_{k-1}$, and thus,
$P_k^{-1}$ can be approximated as a banded matrix. If $S_k$ also has a small bandwidth, then the covariance update step in (\ref{eq:M_update}) is the inversion of a banded matrix as well, ensuring that every matrix in the sequence $\lbrace M_k \rbrace$ has a small bandwidth. 

Given some number $L\in \mathbb N$, consider the EKF algorithm which uses the approximated error covariance matrix sequence $\lbrace \breve M^{-1}_k \rbrace$ in its computation, where
\[
\textit{Bandwidth}(\breve M^{-1}_k) = L
\]
Here, $\breve M^{-1}_k$ is chosen as the matrix of bandwidth $L$ which minimizes the information loss between $M^{-1}_k$ and itself. Such a matrix is unique, and can be computed using the L-banded matrix inversion algorithm introduced in \cite{kavcic2000matrices} and \cite{moura2008distributing}, summarized in Algorithm \ref{alg:LBMI}, where the notation $A^{m}_{n}$ denotes the $(n-m) \times (n-m)$ principle submatrix of the matrix $A$ spanning from row $m$ to row $n$, and from column $m$ to column $n$. The computation of $\breve M^{-1}_k$ requires the inversion of $L \times L$ matrices, which is a significant reduction in complexity from general matrix inversion when $L \ll |\mathcal V|$ \cite[Section 3]{kavcic2000matrices}.
We call this approach the \textit{$L$-Banded Extended Kalman Filter (LB-EKF)}, summarized in Algorithm \ref{alg:lbekf}, which is a family of estimation algorithms parameterized by $L$. 
\begin{algorithm}
\caption{L-Banded Matrix Inversion}
\begin{algorithmic}[1]
\REQUIRE $A \in \mathbb R^{n \times n}$, $L \in \lbrace 0, 1, \dots, n-1\rbrace$
\STATE Initialize a matrix of all zeros, as $Z \leftarrow \bold 0$
\FOR{$l = 1$ to $n-L$}
\STATE $Z^{l}_{l+L} \leftarrow Z^{l}_{l+L} + (A^{l}_{l+L})^{-1}$
\ENDFOR
\FOR{$l=2$ to $n-L$}
\STATE $Z^l_{l+L-1} \leftarrow Z^l_{l+L-1} - (A^l_{l+L-1})^{-1}$
\ENDFOR
\RETURN $Z$, which is the L-banded approximation of $A^{-1}$
\end{algorithmic}
\label{alg:LBMI}
\end{algorithm}
\begin{algorithm}
\caption{L-Banded Extended Kalman Filter (LB-EKF)}
\begin{algorithmic}[1]
\REQUIRE $\hat x(0)$, $M_0$, and  $L \in \lbrace 0, 1, \dots, n-1\rbrace$\\
\hspace{-15pt} At timestep $k$,
\STATE Concatenate the measurements in a vector $y(k)$
\STATE Compute the linearized measurement matrix $H_k$ using (\ref{eq:H})
\STATE Compute $P_k$ as per (\ref{eq:P_update}) and approximate $P_k^{-1}$ using the L-banded inverse of $P_k$ (as per Alg. \ref{alg:LBMI})
\STATE Approximate $M_{k}$ by the L-banded inverse of $(P_k^{-1}+S_k)$, as per (\ref{eq:M_update})
\STATE Compute the state estimate using (\ref{eq:x_update})
\end{algorithmic}
\label{alg:lbekf}
\end{algorithm}

When $L=0$, the approximated covariances $\lbrace \breve M_k \rbrace$ are diagonal matrices; when $L=1$, they are tridiagonal matrices, and so on. When $L=d|\mathcal V| - 1$, $\lbrace \breve M_k \rbrace = \lbrace M_k \rbrace$, i.e., the LB-EKF coincides with the EKF.
The authors of \cite{moura2008distributing} remark that the sequence $\lbrace \breve M^{-1}_k \rbrace$ may diverge when $L$ is chosen to be too small, so it should be chosen to be large enough, with its exact value depending on the model matrices as well as the labeling of the WSN.
% Lastly, the step \red{equation} (which converts the estimate from information form to state-space form) can also be carried out in linear time when $S_{22}$ is a banded matrix, using the Cholesky decomposition \cite{blobel2006cholesky}. Moreover, as this step involves solving a system of linear equations, which is a computationally easier problem than matrix inversion, the inversion of $S$ is still the computational bottleneck of the EKF implementation.

\subsection{Minimal Bandwidth of $\mathcal G$}
\label{sec:graph_bandwidth}
In Section \ref{sec:matrix_bandwidth}, the LB-EKF algorithm was motivated by assuming that the matrix $S_k$ has a small bandwidth. Thus, the main challenge in making the EKF recursion efficient is in ensuring that $S_k$ is a banded matrix. The bandwidth of $S_{k}$ can be reduced by exploiting the properties of $\mathcal G$, the underlying geometric graph of the WSN. To this end, the following definition of the \textit{minimal bandwidth} of a graph is introduced.
\begin{definition}[Minimal bandwidth of a graph]
Let $\Pi$ be the set of bijections from $\mathcal V$ to $\mathcal V$, such that
\begin{equation}
    \Pi = \left \lbrace \pi\ \vert\ \pi:\mathcal V \rightarrow \mathcal V , \textit{ $\pi(\cdot)$ is a bijection} \right\rbrace
\end{equation}
The minimal bandwidth of a graph $\mathcal G=(\mathcal V , \mathcal E))$ is defined as
\begin{equation}
    \varphi _{\textrm{min}}(\mathcal G) = \min_{\pi \in \Pi}\left( \max_{(i,j)\in \mathcal E}(|\pi(i)-\pi(j)|) \right)
\end{equation}
\end{definition}
\noindent
In the foregoing definition, the bijection $\pi$ is a \textit{relabeling} of the graph's vertices. Corresponding to each relabeling $\pi$, there is a unique permutation matrix $P_\pi$, such that the Laplacian matrix of $\mathcal G$ becomes $P_\pi \mathcal L P_\pi ^\top$ after the relabeling. Thus, we see that the notions of bandwidth for matrices and graphs are related as follows.
\begin{equation}
    \varphi_{\textrm{min}} (\mathcal G) = \min_{P_\Pi}\left(\textit{Bandwidth}(P_\pi \mathcal L P_\pi^\top)\right)
\end{equation}
where $P_\Pi$ is the set of all $|\mathcal V|\times |\mathcal V|$ permutation matrices, and the minimum is taken over the bandwidths of all matrices that are permutation-similar to $\mathcal L$. In the literature, $\varphi _{\textrm{min}} (
\mathcal G )$ is simply referred to as the bandwidth of the graph $\mathcal G$ \cite{embedAndProject}. However, we call it the minimal bandwidth in order to avoid any confusion between the two definitions of bandwidth. The bandwidth of a graph refers to the bandwidth of its Laplacian matrix, which may or may not be minimal (i.e., equal to $\varphi _{\textrm{min}}(\mathcal G)$) depending on how the graph's vertices are labeled.

Our interest in connecting these two definitions of bandwidth comes from the following proposition.
\begin{lemma}
The matrix $S^{(2)}_{k}$ in (\ref{eq:Iblock}) has the same sparsity pattern as $\mathcal L\otimes \mathbbm{1}_d$, where $\mathbbm 1_d$ is the $d \times d$ matrix where each entry is $1$.
\label{prop:sparsity}
\end{lemma}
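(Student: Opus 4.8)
The plan is to trace through the construction of $S_k^{(2)}$ from its definition and read off exactly which blocks are non-zero. Recall $S_k = H_k^\top R_k^{-1} H_k$, and the lower block $S_k^{(2)}$ comes entirely from the second block-row of $H_k$, namely the $m|\mathcal E|\times d|\mathcal V|$ sparse matrix whose block rows are indexed by edges $(i,j)\in\mathcal E$. Write $H_k = \big[\,H_k^{\mathcal B};\ \widetilde H_k\,\big]$ where $\widetilde H_k$ is this edge-indexed block; since $R_k$ is block diagonal with the edge-part equal to $\sigma_r I_{m|\mathcal E|}$, we have $S_k^{(2)} = \tfrac{1}{\sigma_r}\widetilde H_k^\top \widetilde H_k$ (the beacon rows contribute only to $S_k^{(1)}$). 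So the whole claim reduces to identifying the sparsity pattern of $\widetilde H_k^\top \widetilde H_k$.

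Next I would view $\widetilde H_k^\top\widetilde H_k$ in terms of $d\times d$ blocks indexed by pairs of vertices $(p,q)\in\mathcal V\times\mathcal V$: the $(p,q)$ block equals $\sum_{(i,j)\in\mathcal E} (\widetilde H_k)_{(i,j),p}^\top (\widetilde H_k)_{(i,j),q}$, where $(\widetilde H_k)_{(i,j),p}$ is the $m\times d$ block in the row-block for edge $(i,j)$ and column-block for vertex $p$. From the form of $H_k$ in (\ref{eq:H}), the row-block for edge $(i,j)$ is non-zero only in the two column-blocks $p=i$ and $p=j$, where it equals $\frac{\delta\phi}{\delta x_i}\big\rvert_{\hat x(k)}$ and $\frac{\delta\phi}{\delta x_j}\big\rvert_{\hat x(k)}$ respectively. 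Hence the $(p,q)$ block of $\widetilde H_k^\top\widetilde H_k$ is a sum of outer-product terms, and a term is contributed only by an edge $(i,j)$ with $\{p,q\}\subseteq\{i,j\}$. Therefore the $(p,q)$ block can be non-zero only when $p=q$ (diagonal blocks, fed by every edge incident to $p$) or when $(p,q)\in\mathcal E$ (off-diagonal blocks, fed by the single edge $(p,q)$) — which is precisely the non-zero block pattern of $\mathcal L$, tensored up to $d\times d$ blocks that are in general full. Comparing against $\mathcal L\otimes\mathbbm 1_d$: the diagonal blocks of $\mathcal L\otimes\mathbbm 1_d$ are $(\deg i)\,\mathbbm 1_d$ (non-zero), the $(i,j)$ block is $-\mathbbm 1_d$ for $(i,j)\in\mathcal E$ (non-zero), and all other blocks are zero, by (\ref{eq:laplacian}). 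This matches the pattern just derived, establishing the claim.

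The one subtlety — and the main thing to be careful about rather than the main obstacle — is the word "same sparsity pattern": it must be interpreted at the level of $d\times d$ blocks, i.e., $(S_k^{(2)})_{ab}$ may be non-zero exactly when the block containing entry $(a,b)$ corresponds to a non-zero block of $\mathcal L$. One should note that a generic entry within an "active" block is non-zero (so the pattern is $\mathbbm 1_d$, not $I_d$), and acknowledge the harmless caveat that a particular Jacobian $\frac{\delta\phi}{\delta x_i}$ could be rank-deficient or accidentally produce a zero entry at a non-generic $\hat x(k)$; by Assumption \ref{ass:observability} (and since $\phi$ is differentiable a.e.), this does not occur on the dense set $\mathcal O$, so the stated block pattern holds generically. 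I would phrase the conclusion as: the set of block positions where $S_k^{(2)}$ is (generically) non-zero coincides with the set of block positions where $\mathcal L\otimes\mathbbm 1_d$ is non-zero, which is exactly what is needed for the bandwidth arguments that follow.
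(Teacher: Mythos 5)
Your proposal is correct and follows essentially the same route as the paper's own proof: partition $S^{(2)}_k$ into $d\times d$ blocks indexed by vertex pairs, express each off-diagonal block as a sum over measurement rows of Jacobian products, and observe that the block is non-zero exactly when some measurement depends on both $x_i$ and $x_j$, i.e., when $(i,j)\in\mathcal E$, matching the sparsity pattern of $\mathcal L\otimes\mathbbm 1_d$. Your added remark that the non-zeroness holds generically (a particular linearization point could accidentally zero an entry) is a reasonable clarification of what the paper means by ``non-zero in general,'' but it does not change the argument.
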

\begin{proof}
Partition $S^{(2)}_{k}$ into $d \times d$ blocks, and consider the $ij^{th}$ block. When $i=j$ the block is non-zero in general. For $i\neq j$, the $ij^{th}$ block corresponds to the vertex pair $(i,j)\in\mathcal V \times \mathcal V $, and has the form 
\begin{equation*}
\frac{1}{\sigma_r}\sum_{l}\left(\frac{\delta h_l}{\delta x_i}\right)^\top\frac{\delta h_l}{\delta x_j}
\end{equation*}
which is only non-zero if there is an entry in $h(x)$ that depends on both $x_i$ and $x_j$. This happens precisely when $(i,j)\in \mathcal E$. Using the definition of $\mathcal L$ in (\ref{eq:laplacian}), we see that the blocks of $S^{(2)}_{k}$ have the sparsity pattern of $\mathcal L$.
\end{proof}
\noindent
Note that the bandwidth of $S_k$ is dictated by the bandwidth of $S^{(2)}_k$, as $S^{(1)}_k$ is a diagonal matrix. Using Proposition \ref{prop:sparsity} and a simple counting argument, we have
\begin{equation}
\textit{Bandwidth}(S_{k}) = d\big(\textit{Bandwidth}(\mathcal L)+1\big)-1
\end{equation}
Thus, the complexity of implementing LB-EKF scales with the bandwidth of the Laplacian matrix $\mathcal L$, which in turn depends on how the vertices of the graph $\mathcal G$ are labeled. Unfortunately, for any given graph, there exists a labeling such that the Laplacian matrix has maximal bandwidth, which is $|\mathcal V|-1$ (corresponding to the case where agent $1$ is connected to agent $n-1$), so the bandwidth of $S_{k}$ can be arbitrarily large. This makes the problem of finding a good relabeling of $\mathcal V$ an important sub-problem for large-scale WSN localization.
\subsection{Vertex Relabeling}
\label{sec:vr_gg}
Naturally, the most desirable labeling of the WSN agents is the one that corresponds to the minimal bandwidth of the corresponding graph. Much effort has gone into finding efficient algorithms for finding the minimal bandwidth of a graph, which is known to be an NP-complete problem even when restricted to geometric graphs \cite[Theorem 3.1]{bandwidth_NPcompleteness2001Penrose}. 
%Reason:Caterpillar graphs are geometric graphs
Instead, heuristic approaches are used in practice, such as the Reverse Cuthill-McKee \cite{rcm2017} and Embed and Project \cite{embedAndProject} algorithms, amongst others \cite{bandwidthsSurvey2014}. Since these algorithms do not exploit the properties of geometric graphs, we propose a novel low-complexity algorithm for reducing the bandwidth of geometric graphs, called the vertex relabeling algorithm, and obtain an upper bound on the resulting bandwidth. 

To arrive at the algorithm, note that a relabeling of the vertices of a geometric graph corresponds to a map from points in $\mathbb R^d$ onto integers on the number line. This suggests that one can choose a suitable line in $\mathbb R^d$ and project the positions of the agents onto this line. In the absence of any additional heuristics about the sensor network configuration, we can simply choose the $\textrm{x}$-axis 
as this line. Combining these ideas gives us the vertex relabeling algorithm (Alg. \ref{alg:vr}). As the algorithm only involves a sorting operation, its complexity is $O\big(|\mathcal V|\log(|\mathcal V|)\big)$ \cite{cormen2022introduction}.

\begin{algorithm}
\caption{Vertex Relabeling (VR) Algorithm for Geometric Graphs}
\begin{algorithmic}[1]
\REQUIRE A realization $(\mathcal G,\mathcal X)$ of a geometric graph.
% \STATE Choose a suitable orthogonal rotation matrix $Q$, e.g., using any available heuristics (see Remark \ref{rem:suitable}).
% \STATE Update all position vectors to the new (rotated) coordinate system, as $\tilde x_i \leftarrow Qx_i,\ \forall i\in \mathcal V$
\STATE Sort the vertices in $\mathcal V$ in increasing order of the $\textrm{x}$-coordinates of $\tilde x_i$. Resolve any ties by sorting as per their $\textrm{y}$-coordinate, and so on.
\RETURN The reordered vertex set $\mathcal V_\pi$.
\end{algorithmic}
\label{alg:vr}
\end{algorithm}
% \begin{remark}[Using additional heuristics with Alg. \ref{alg:vr}]
% It is possible to use additional heuristics to improve the effectiveness of the relabeling. As illustrated in Fig. \ref{fig:graphs_2}, the coordinate system used for specifying the agents' positions can be rotated to align the $\textrm x$-axis along the direction of the two most distant (in Euclidean distance) vertices. Nevertheless, the upper bound on the performance of Algorithm \ref{alg:vr} derived in this section holds true independently of the 
% availability of such heuristics, and we show in Sec.
% \ref{sec:RandomGraphs} that the relabeling is always effective in typical WSN scenarios.
% \label{rem:suitable}
% \end{remark}
\begin{figure}[h]
\centering
    % \begin{subfigure}[t]{0.28\textwidth}
    %     \centering
    %     \includegraphics[width=0.72\textwidth]{images/labeled_graphs/circle_graph.eps}
    %     \caption{A labeling of a cycle graph which has bandwidth 8 (since vertex 1 is connected to vertex 9).}
    %     \label{fig:graphs_1}
    % \end{subfigure}\hspace{10pt}
    % \begin{subfigure}[t]{0.49\textwidth}
    % \captionsetup{width=1.0\linewidth}
        % \centering
        \begin{subfigure}{0.2\textwidth}
            \centering
            \includegraphics[width=1.0\textwidth]{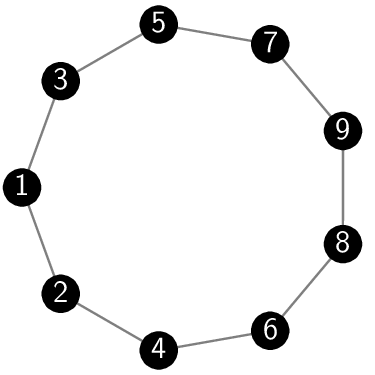}
        \end{subfigure}
        \begin{subfigure}[t]{0.28\textwidth}
            \centering
            \includegraphics[width=1.0\textwidth]{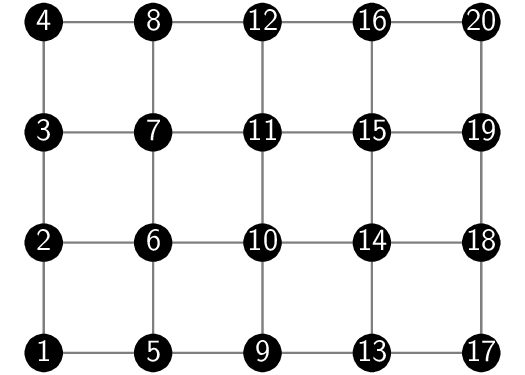}
        \end{subfigure}
        \caption{Labeling a cycle graph and a grid graph using Algorithm \ref{alg:vr} achieves the minimal bandwidth (2 and 4, respectively) on either graph.}
        \label{fig:graphs_2}
    % \end{subfigure}
    % \caption{Effect of vertex relabeling on the bandwidth of geometric graphs.}
\end{figure}
It remains to be justified that Algorithm \ref{alg:vr} is indeed effective in reducing the bandwidth of a general geometric graph.
Let the bandwidth achieved by the vertex relabeling algorithm (Alg. \ref{alg:vr}) on a geometric graph $\mathcal G$ be denoted as $\varphi (\mathcal G, \mathcal X)$. Here, $\mathcal X$ is the set of positions of the agents.
It can be noted that $\varphi(\mathcal G, \mathcal X) = \varphi_{\textrm{min}}(\mathcal G)$ for the graphs in Fig. \ref{fig:graphs_2}. More generally, a useful bound on the performance of Algorithm \ref{alg:vr} on geometric graphs can be derived as follows.
\begin{theorem}[Upper bound of $\varphi(\mathcal G, \mathcal X)$]
The bandwidth achieved by Algorithm \ref{alg:vr} on a geometric graph of sensing radius $r$, $\mathcal G=(\mathcal V, \mathcal E)$, has the following upper bound:
\begin{align}
\varphi (\mathcal G, \mathcal X) &\leq
\max_{a\in \mathbb R}\Big
( | \mathcal{X}\cap \textit{R}_\emph{x}(a, a+r) | \Big)  \nonumber
\\
&\eqqcolon \varphi _{\textrm{max}}(\mathcal X, r)
\label{eqn:vr_bound}
\end{align}
where $\textit{R}_\emph{x}(a_1,a_2)\coloneqq [a_1, a_2]\times\mathbb R^{d-1}$.
\label{theorem:vr_bound}
\end{theorem}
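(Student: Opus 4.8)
The plan is to bound, for every edge $(i,j)\in\mathcal E$, the quantity $|\pi(i)-\pi(j)|$, where $\pi$ is the relabeling produced by Algorithm \ref{alg:vr} (sorting by x-coordinate, with ties broken lexicographically). Fix such an edge and assume without loss of generality that $\pi(i)<\pi(j)$; write the x-coordinates as $a_i$ and $a_j$. Since $(i,j)\in\mathcal E$, Assumption \ref{ass:geom} gives $\|x_i-x_j\|\le r$, and in particular $|a_i-a_j|\le r$. The key observation is that $\pi(j)-\pi(i)$ equals the number of vertices $v$ whose sorted position lies strictly between those of $i$ and $j$, plus one; by the definition of the sort, every such vertex $v$ has x-coordinate lying in the closed interval $[a_i,a_j]$ (ties in x are broken by further coordinates, so a vertex with the same x-coordinate as $i$ or $j$ can also fall between them, but it still lies in $R_{\mathrm x}(a_i,a_j)$). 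Hence the set of vertices with labels in $\{\pi(i),\pi(i)+1,\dots,\pi(j)\}$ all have positions in the slab $R_{\mathrm x}(a_i,a_j)$.

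Next I would upgrade this to the stated bound. The slab $R_{\mathrm x}(a_i,a_j)=[a_i,a_j]\times\mathbb R^{d-1}$ has x-width $|a_i-a_j|\le r$, so it is contained in $R_{\mathrm x}(a_i,a_i+r)$. Therefore
\[
\pi(j)-\pi(i)+1 \;=\; \big|\{v:\pi(i)\le\pi(v)\le\pi(j)\}\big| \;\le\; \big|\mathcal X\cap R_{\mathrm x}(a_i,a_i+r)\big| \;\le\; \max_{a\in\mathbb R}\big|\mathcal X\cap R_{\mathrm x}(a,a+r)\big| .
\]
This shows $|\pi(i)-\pi(j)|\le \varphi_{\mathrm{max}}(\mathcal X,r)-1<\varphi_{\mathrm{max}}(\mathcal X,r)$, but in fact, since the endpoint vertices $i$ and $j$ are themselves in the count, one must be slightly careful: the set $\{v:\pi(i)\le\pi(v)\le\pi(j)\}$ has cardinality $\pi(j)-\pi(i)+1$ and is contained in $\mathcal X\cap R_{\mathrm x}(a_i,a_i+r)$, giving $\pi(j)-\pi(i)\le \varphi_{\mathrm{max}}(\mathcal X,r)-1$. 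Taking the maximum over all edges $(i,j)\in\mathcal E$ yields $\varphi(\mathcal G,\mathcal X)\le\varphi_{\mathrm{max}}(\mathcal X,r)-1\le\varphi_{\mathrm{max}}(\mathcal X,r)$, which is the claimed inequality (\ref{eqn:vr_bound}); if the tighter form with the $-1$ is wanted it follows immediately, but as stated the weaker bound suffices.

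The only subtle point — and the step I expect to need the most care — is the tie-breaking in the sort: when several vertices share an x-coordinate, the lexicographic tie-break still orders them, and I must confirm that any vertex landing between $i$ and $j$ in this refined order has x-coordinate in $[a_i,a_j]$. This is immediate because the primary sort key is the x-coordinate, so the x-coordinates along the sorted sequence are non-decreasing; hence any vertex between $i$ and $j$ has x-coordinate between $a_i$ and $a_j$ inclusive, and lies in the slab. Everything else is the elementary counting argument above, together with a single invocation of Assumption \ref{ass:geom} to control the x-separation of adjacent vertices by $r$.
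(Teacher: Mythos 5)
Your argument is correct and rests on the same key observation as the paper's proof: because Algorithm \ref{alg:vr} sorts by x-coordinate, every vertex whose label lies between those of an edge's endpoints $i,j$ has its position in a vertical slab of x-width at most $r$ (by Assumption \ref{ass:geom}), so the label gap is bounded by the number of points of $\mathcal X$ in such a slab. The paper packages this via an auxiliary supergraph $\mathcal G'$ (joining all pairs lying in a common slab) together with subgraph monotonicity of the bandwidth, whereas you count directly edge by edge; this is essentially a presentational difference, and both versions in fact yield the marginally sharper bound $\varphi_{\textrm{max}}(\mathcal X, r)-1$.
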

\begin{proof}
% Note that $\textit{R}_\textrm{x}(a_1,a_2)$ represents a rectangular strip in Euclidean space.
Recall that $\varphi(\mathcal G, \mathcal X)$ is the bandwidth of the Laplacian matrix of the graph after vertex relabeling. When $\mathcal G$ is the subgraph of another graph on the same vertex set, $\mathcal{G}' = (\mathcal{V}, \mathcal{E}')$, we have
\begin{equation}
    \mathcal{E} \subseteq \mathcal{E}' \Rightarrow \varphi(\mathcal G, \mathcal X) \leq \varphi(\mathcal G', \mathcal X)
    \label{eq:subgraph_inequality}
\end{equation}
since adding edges to a graph corresponds to changing the corresponding entries of its Laplacian matrix from 0 to 1, which can only increase the bandwidth.
To prove the main result, consider the graph $\mathcal{G}'$ whose edge set is as follows:
\begin{equation}
    x_j \in \textit{R}_\textrm{x}(x_i, x_i+r) \Rightarrow (i,j) \in \mathcal E'
\end{equation}
$\forall i \in \mathcal V$, where $\textit{R}_\textrm{x}(a_1,a_2)$ is as defined in the statement of the theorem, and can be thought of as an infinitely long strip in Euclidean space. Since
\begin{align}
(i,j) \in \mathcal E  & \Rightarrow  \|x_i - x_j\| \leq r \nonumber \\ & \Rightarrow   x_j \in \textit{R}_\textit{x}(x_i, x_i+r)\ \textrm{or}\  x_i \in \textit{R}_\textrm{x}(x_j, x_j+r) & \nonumber \\
&\Rightarrow (i,j) \in \mathcal E'
\end{align}
we have that $\mathcal{E}\subseteq \mathcal{E}'$, i.e., $\mathcal{G}$ is indeed a subgraph of $\mathcal{G}'$.
Using the definitions of matrix bandwidth and graph Laplacian, we have
\begin{align}
\varphi(\mathcal G', \mathcal X) &= \max_{(i,j)\in \mathcal E'} \left( |i - j|  \right)\\
&= \max_{i \in \mathcal V}\Big(\max \left(
\left\lbrace j - i \ \big|\ j\geq i, (i,j)\in \mathcal E' \right\rbrace
\right)
\Big)
\\
&= \max_{i\in \mathcal V}\Big
( | \mathcal{X}\cap \textit{R}_\textrm{x}(x_i, x_i+r) | - 1 \Big) 
\end{align}
where the last equality follows from the construction of $\mathcal G'$. Using (\ref{eq:subgraph_inequality}), we arrive at the desired inequality (\ref{eqn:vr_bound}).
\end{proof}
Clearly, $\varphi _{\textrm{max}}(\mathcal X, r)$ bounds the minimal bandwidth $\varphi _{\textrm{min}} (\mathcal G)$ of a geometric graph as well. The following corollary shows that this bound can be further improved by minimizing (\ref{eqn:vr_bound}) over the set of realizations that induce the same edge set $\mathcal E$.
\begin{corollary}[Upper bound of $\varphi_{\textrm{min}}(\mathcal G)$]
\label{corollary:sig_min}
For a geometric graph $\mathcal G=(\mathcal V, \mathcal E)$,
\begin{equation}
\varphi _{\textrm{min}} (\mathcal G) \leq \min_{(\mathcal X', r') \in \mathcal{R}_\mathcal{G}} 
% \left( \max_{a\in \mathbb R}\Big
% ( | \mathcal{X'}\cap \textit{R}_\textrm{x}(a, a+r') | \Big) \right)
\varphi _{\textrm{max}}(\mathcal X', r')
 \label{corollary_bound}
\end{equation}
where $\mathcal{R}_\mathcal{G}$ is the set of tuples defined as
\begin{equation}
 \mathcal{R}_\mathcal{G} \coloneqq  \left\lbrace 
 (\mathcal X', r')\ \big\vert \ (\mathcal G, \mathcal X', r') \textit{ is a geometric graph} 
 \right\rbrace
 \label{eqn:R_g_defn}
\end{equation}
\end{corollary}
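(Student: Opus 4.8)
The plan is to obtain the corollary directly from Theorem \ref{theorem:vr_bound}, using the observation that $\varphi_{\textrm{min}}(\mathcal G)$ depends only on the edge set $\mathcal E$ of $\mathcal G$, whereas Theorem \ref{theorem:vr_bound} bounds the bandwidth of \emph{one particular} relabeling --- the one returned by Algorithm \ref{alg:vr} --- separately for \emph{each} realization that induces $\mathcal E$. Concretely, I would fix an arbitrary $(\mathcal X', r') \in \mathcal R_{\mathcal G}$. By the definition (\ref{eqn:R_g_defn}) of $\mathcal R_{\mathcal G}$, the tuple $(\mathcal G, \mathcal X', r')$ is a geometric graph, so Algorithm \ref{alg:vr} may be run on the realization $(\mathcal G, \mathcal X')$ with sensing radius $r'$; it outputs a sorting of $\mathcal V$, which is in particular a bijection $\mathcal V \to \mathcal V$, hence an element of $\Pi$. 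Since $\varphi_{\textrm{min}}(\mathcal G)$ is the minimum of the Laplacian bandwidth over all of $\Pi$, this immediately gives $\varphi_{\textrm{min}}(\mathcal G) \leq \varphi(\mathcal G, \mathcal X')$.

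Next I would apply Theorem \ref{theorem:vr_bound} to this same realization --- which is legitimate precisely because $(\mathcal G, \mathcal X', r')$ satisfies the hypotheses of the theorem --- to conclude $\varphi(\mathcal G, \mathcal X') \leq \varphi_{\textrm{max}}(\mathcal X', r')$. Chaining the two inequalities yields $\varphi_{\textrm{min}}(\mathcal G) \leq \varphi_{\textrm{max}}(\mathcal X', r')$, and since this holds for \emph{every} $(\mathcal X', r') \in \mathcal R_{\mathcal G}$ while the left-hand side is independent of the realization, taking the infimum of the right-hand side over $\mathcal R_{\mathcal G}$ gives (\ref{corollary_bound}). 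The set $\mathcal R_{\mathcal G}$ is nonempty (the realization fixed in the problem data belongs to it by Assumption \ref{ass:geom}), so the bound is not vacuous.

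I do not expect a genuine obstacle: the corollary is essentially a repackaging of Theorem \ref{theorem:vr_bound} together with the trivial estimate $\varphi_{\textrm{min}}(\mathcal G) \leq \varphi(\mathcal G, \mathcal X')$ that is valid for any realization of $\mathcal G$. The only points worth a sentence of care are (i) verifying that the output of Algorithm \ref{alg:vr} is indeed a member of $\Pi$, so that the first inequality is meaningful, and (ii) if one prefers to write $\min$ rather than $\inf$ in (\ref{corollary_bound}), noting that $\varphi_{\textrm{max}}(\cdot,\cdot)$ is integer-valued and bounded between $1$ and $|\mathcal V|$, so that the infimum over $\mathcal R_{\mathcal G}$ is in fact attained.
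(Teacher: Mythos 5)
Your proposal is correct and follows essentially the same route as the paper: the paper's proof likewise observes that $\varphi_{\textrm{min}}(\mathcal G)$ depends only on $\mathcal E$ and chains the inequalities $\varphi_{\textrm{min}}(\mathcal G)\leq \varphi(\mathcal G,\mathcal X')\leq \varphi_{\textrm{max}}(\mathcal X',r')$ for every $(\mathcal X',r')\in\mathcal R_{\mathcal G}$ before minimizing over $\mathcal R_{\mathcal G}$. Your additional remarks (Algorithm \ref{alg:vr} outputs a bijection in $\Pi$, nonemptiness of $\mathcal R_{\mathcal G}$, and attainment of the minimum) are sound elaborations the paper leaves implicit.
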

\begin{proof}
Observe that $\varphi _{\textrm{min}}(\mathcal G)$ is completely determined by the edge set $\mathcal E$ of the graph. Thus, the inequalities
\begin{equation}
\varphi_{\textrm{min}}(\mathcal G)  \leq  \varphi  (\mathcal G, \mathcal X')\leq \varphi _{\textrm{max}}(\mathcal X', r')
\label{eq:bound_relationship}
\end{equation}
hold for all $(\mathcal X', r') \in \mathcal R_{\mathcal G}$, giving us (\ref{corollary_bound}).
\end{proof}
% The bound in Corollary \ref{corollary:sig_min} is tight, as it is equal to the minimal bandwidth for the graphs in Fig. \ref{fig:graphs_2}. 
% As the set $\mathcal{R}_\mathcal{G}$ has uncountably many elements, the minimization in (\ref{corollary_bound}) can be carried out over a finite set ${\mathcal{R}'}_\mathcal{G}\subseteq \mathcal{R}_\mathcal{G}$ in practice, in order to obtain a conservative bound. 
Corollary \ref{corollary:sig_min} is presented here only as a tangential result, because the minimization in (\ref{corollary_bound}) may be intractable, whereas the bound given in Theorem \ref{theorem:vr_bound} is achievable in $O(|\mathcal V||\log |\mathcal V|)$ time.
Equation (\ref{eq:bound_relationship}) summarizes the relationship between the various quantities introduced in this section.

It remains to be established whether the bound achieved by the vertex relabeling algorithm (Alg. \ref{alg:vr}) given in Theorem \ref{theorem:vr_bound} is indeed small. A tight bound on the minimal bandwidth of graphs is 
\begin{equation}
\varphi _{\textrm{min}} (\mathcal G) \leq |\mathcal V| - \max_{i,j \in \mathcal V}\left(\textit{d}(i,j)\right)
\label{eqn:general_upper_bound}
\end{equation}
where $d(\cdot,\cdot)$ is the length of the shortest path between a given pair of vertices in a graph \cite[Th. 3.2.1]{chinn1982bandwidth}. This bound is not always useful for geometric graphs; in fact, one can observe that the bound (\ref{eqn:vr_bound}) is more informative than (\ref{eqn:general_upper_bound}) on the cycle graph in Fig. \ref{fig:graphs_2}. Moreover, computing the second term of (\ref{eqn:general_upper_bound}) itself requires up to $O(|\mathcal V|^3)$ operations \cite[Section 25.2]{cormen2022introduction}, whereas Algorithm \ref{alg:vr} can be viewed as an efficient algorithm for bounding the bandwidth of a geometric graph. Finally, it is shown in the next section that the vertex relabeling algorithm proves to be highly effective at reducing the graph bandwidth in typical WSN scenarios.

\subsection{Random Geometric Graphs}
\label{sec:RandomGraphs}
%A tangential contribution of this section is that we obtain an upper bound on the minimal bandwidth of random geometric graphs which is sub-linear in the number of vertices, in contrast to (\ref{eqn:general_upper_bound}) which grows linearly with $|\mathcal V|$.
Usually, WSNs are deployed to accomplish some underlying objective, such as in surveillance networks, ad-hoc mesh networks and search-and-rescue missions. In such applications, it is typical for the agents to be deployed with some expected spatial density, whereas the exact number of agents in a given region may not be known a priori. For simplicity of presentation, we assume that the expected spatial density of agents is uniform across the region of application, though the forthcoming results could be readily generalized to the case when the density is non-uniform. A commonly used model for large-scale WSNs which encapsulates these ideas is that of \textit{random geometric graphs} \cite{2010randomWSN,2010PoissonRGG, dimakis2006geographic}, which are characterized as follows.

% Consider the following mechanism for generating a random graph and its realization in $\mathbb R^d$:
\begin{algorithm}
\caption{Generative mechanism for random geometric graphs}
\begin{algorithmic}[1]
\REQUIRE The domain $D\subseteq \mathbb R^d$,  the rate parameter $\lambda$, and the sensing radius $r\in(0, \infty)$ 
\STATE Generate a set of points $\mathcal X^{(\lambda)} = \lbrace x^{(\lambda)}_1, x^{(\lambda)}_2, \dots \rbrace$ in the domain $D$ using the homogeneous Poisson point process with rate parameter $\lambda$.
\STATE Construct the vertex set $\mathcal V^{(\lambda)}$, with a vertex assigned to each of the points.
\STATE Construct the edge set $\mathcal E^{(\lambda)}$ as per the constraint on geometric graphs (\ref{eq:geom}), with the sensing radius $r$.
\RETURN $\mathcal G^{(\lambda)}\coloneqq (\mathcal V^{(\lambda)}$, $\mathcal E^{(\lambda)})$ and $\mathcal X^{(\lambda)}$
\end{algorithmic}
\label{alg:PRGG}
\end{algorithm}
\begin{definition}[Random Geometric Graph \cite{gilbert1961random}]
Any realization $(\mathcal G^{(\lambda)}, \mathcal X^{(\lambda)})$ generated using Algorithm \ref{alg:PRGG} is called a random geometric graph.
\end{definition}
% \begin{remark}
% Note that Algorithm \ref{alg:PRGG} constructs both a graph $\mathcal G^{(\lambda)}$ and a realization $(\mathcal G^{(\lambda)}, \mathcal X^{(\lambda)})$, where we refer to the latter as a random geometric graph.
The choice of the Poisson point process as the underlying mechanism for generating the locations of WSN agents is well-motivated in the literature as well \cite{haenggi2012stochastic, WSN2021}. % Notably, the Poisson point process is the unique stochastic process that satisfies certain independence assumptions (which are higher-dimensional analogues of the memorylessness property of the 1-dimensional Poisson process) 
% , so it characterizes `complete randomness' \cite[Sec. 2.2]{daley2008poissonTheory}.
% \end{remark}
The rate parameter $\lambda$ of the process dictates the spatial density of agents. Lemma \ref{lem:poisson} lists some of the important properties of random geometric graphs which follow from its definition.

\begin{lemma}[Properties of random geometric graphs \cite{daley2008poissonTheory}]
A random geometric graph $(\mathcal G^{(\lambda)}, \mathcal X^{(\lambda)})$ in $D\subseteq \mathbb R^d$ has the following properties: 
\begin{enumerate}
\item For any region $ A \subseteq D$, the distribution of the points $\mathcal X^{(\lambda)}\cap A$ conditioned on the number of points $|\mathcal X^{(\lambda)}\cap A|$ is the multivariate uniform random distribution over the support $A$.
\item For any region $ A \subseteq D$,
\begin{equation}
    |\mathcal X^{(\lambda)} \cap A|\sim \textit{Po}\left(\lambda\cdot\textit{Vol}(A)\right)
    \label{eqn:subsetsPoisson}
\end{equation}
where $\textit{Vol}(\cdot)$ is the $d$-dimensional volume. 
\item The expected number of vertices in $\mathcal G^{(\lambda)}$ is
\begin{equation}
\mathbb E\big[|\mathcal V^{(\lambda)}|\big]  =  \mathbb E\big[|\mathcal X^{(\lambda)} \cap D|\big] = \lambda\cdot\textit{Vol}(D)
\label{eqn:totalPoisson}
\end{equation}
\end{enumerate}
\label{lem:poisson}
\end{lemma}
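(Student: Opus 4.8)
The plan is to obtain all three properties directly from the defining characterization of the homogeneous Poisson point process (PPP) of rate $\lambda$ on $D$ that is invoked in Step 1 of Algorithm \ref{alg:PRGG}: namely, (i) for every measurable region $A\subseteq D$ the count $|\mathcal X^{(\lambda)}\cap A|$ is Poisson distributed with mean $\lambda\cdot\textit{Vol}(A)$, and (ii) the counts over pairwise-disjoint regions are mutually independent. With this in hand, the second statement of the lemma, $|\mathcal X^{(\lambda)}\cap A|\sim\textit{Po}(\lambda\cdot\textit{Vol}(A))$, is nothing but clause (i). The third statement follows by applying the second with $A=D$ and recalling that a $\textit{Po}(\mu)$ variable has mean $\mu$; since Step 2 of Algorithm \ref{alg:PRGG} puts the vertices of $\mathcal V^{(\lambda)}$ in bijection with the points of $\mathcal X^{(\lambda)}$, this gives $\mathbb E[|\mathcal V^{(\lambda)}|]=\mathbb E[|\mathcal X^{(\lambda)}\cap D|]=\lambda\cdot\textit{Vol}(D)$.

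For the first statement I would argue by a partition (bookkeeping) computation. Fix a region $A\subseteq D$ with $\textit{Vol}(A)>0$ and a finite measurable partition $A=\bigsqcup_{\ell=1}^{L}A_\ell$. Conditioning on $\{|\mathcal X^{(\lambda)}\cap A|=n\}$ and using (i)--(ii), the probability that cell $A_\ell$ contains exactly $n_\ell$ points, for any $n_1+\dots+n_L=n$, is
\[
\frac{\prod_{\ell=1}^{L}\big(e^{-\lambda\textit{Vol}(A_\ell)}(\lambda\textit{Vol}(A_\ell))^{n_\ell}/n_\ell!\big)}{e^{-\lambda\textit{Vol}(A)}(\lambda\textit{Vol}(A))^{n}/n!}=\binom{n}{n_1,\dots,n_L}\prod_{\ell=1}^{L}\Big(\frac{\textit{Vol}(A_\ell)}{\textit{Vol}(A)}\Big)^{n_\ell},
\]
where the exponentials and the powers of $\lambda$ cancel because $\sum_\ell\textit{Vol}(A_\ell)=\textit{Vol}(A)$. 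The right-hand side is exactly the probability that $n$ i.i.d. $\mathrm{Unif}(A)$ samples land in the cells $A_1,\dots,A_L$ with the prescribed multiplicities. Since this identity holds for every finite measurable partition of $A$, the conditional law of the $n$-point configuration in $A$ agrees with that of $n$ i.i.d. uniform points on a generating $\pi$-system of $A^n$, and a standard $\pi$--$\lambda$ / monotone-class argument promotes this to equality of the full conditional distributions, which is the claim.

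I do not expect a genuine obstacle here --- these are textbook facts about Poisson processes, and the lemma already cites \cite{daley2008poissonTheory}. The only mildly delicate step is the last one, upgrading agreement of all finite cell-count distributions to equality of the conditional point law on $A^n$; this is routine but should be stated carefully via the uniqueness clause of a $\pi$--$\lambda$ theorem. If a fully self-contained proof is not desired, it would also be acceptable to present the partition computation as the heart of the argument and simply cite \cite{daley2008poissonTheory} for clauses (i)--(ii) and for the measure-theoretic closure step.
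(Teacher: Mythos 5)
Your proof is correct, but note that the paper does not actually prove this lemma: it is stated as a collection of known properties of the homogeneous Poisson point process and delegated entirely to the citation of Daley and Vere--Jones, so there is no in-paper argument to compare against. What you supply is the standard textbook derivation, and it is sound: property 2 is the defining count distribution of the process, property 3 follows by taking $A=D$ together with the bijection between $\mathcal V^{(\lambda)}$ and $\mathcal X^{(\lambda)}$ from Algorithm 3, and property 1 (conditional uniformity) follows from your multinomial cancellation over finite partitions, since the resulting cell-count law is exactly that of $n$ i.i.d.\ uniform points on $A$, and agreement of all finite cell-count distributions determines the law of the restricted point configuration via the monotone-class step you flag. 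The only small caveats worth stating explicitly are that $A$ should have finite positive volume, that one conditions on $n\geq 1$, and that ``the distribution of the points'' is to be read as the law of the unordered configuration (equivalently, of the points under a uniformly random labeling), which is precisely what the cell-count argument identifies.
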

% It was noted in \cite{eren2004rigidity} that larger values of $r$ increase the probability that the observability condition Assumption \ref{ass:observability} is satisfied. 
\noindent
Consider a random geometric graph in $D=[0,l]^d$. 
Using (\ref{eqn:subsetsPoisson}), and the fact that $|\mathcal X \cap D^c|=0$, we can see that
\begin{equation}
    | \mathcal{X}^{(\lambda)}\cap \textit{R}_\textrm{x}(a_1,a_2) |
    \sim \textit{Po}(\lambda l(a_2-a_1))
    \label{1dPoisson}
\end{equation}
where $R_\textrm{x}(a_1,a_2)$ is defined as in Theorem \ref{theorem:vr_bound}. Equivalently, $| \mathcal{X}^{(\lambda)}\cap \textit{R}_\textrm{x}(0,a) |$ is a 1-dimensional Poisson point process indexed by $a\in [0,l]$, having the rate parameter $\lambda l$. To see this, one can apply the property (\ref{eqn:subsetsPoisson}) to this 1-dimensional process to yield (\ref{1dPoisson}).

Using Theorem \ref{theorem:vr_bound}, we have
\begin{align}
\varphi (\mathcal G^{(\lambda)}, \mathcal X^{(\lambda)}) & \leq \varphi_\textit{max} (\mathcal X^{(\lambda)}, r) \nonumber \\
&= \max_{0\leq a\leq l-r}\Big( | \mathcal{X}^{(\lambda)}\cap \textit{R}_\textrm{x}(a,a+r) | \Big)
\label{eq:poisson_1d_bound}
\end{align}
The last term in (\ref{eq:poisson_1d_bound}) is a random variable, called the \textit{scan statistic} of the 1-dimensional Poisson point process (\ref{1dPoisson}). As tractable expressions (e.g., elementary functions of $\lambda$ and $l$) of moments, conditional moments or tail probabilities of the scan statistic are unknown \cite{glaz2001scan}, it is difficult to characterize an upper bound for the scan statistic. It is easier to study the term inside the $\max(\cdot)$ operation instead. Using the property (\ref{eqn:subsetsPoisson}) of the 1-dimensional process (\ref{1dPoisson}), one can observe that
\begin{equation}
\mathbb E \left[| \mathcal{X}^{(\lambda)}\cap \textit{R}_\textrm{x}(a,a+r) | \right] = \lambda lr 
\end{equation}
whereas from (\ref{eqn:totalPoisson}), we have $E\left[|\mathcal V^{(\lambda)}| \right] = \lambda l^2$, so that
\begin{align}
\mathbb E \left[| \mathcal{X}^{(\lambda)}\cap \textit{R}_\textrm{x}(a,a+r) | \right] = r \sqrt{\lambda \mathbb E\left[|\mathcal V^{(\lambda)}| \right]}
\label{eq:sublinear_growth}
\end{align}
This indicates that the upper bound on $\varphi(\mathcal G^{(\lambda)}, \mathcal X^{(\lambda)})$ is sublinear in the number of agents, unlike the general upper bound $(\ref{eqn:general_upper_bound})$ which is linear. Indeed, the preceding argument is confirmed through simulations in Section \ref{sec:simulations}, specifically in Fig. \ref{fig:scan_statistic}.
The sublinear growth of the bandwidth achieved by the vertex relabeling algorithm, combined with its ease of implementation, motivates its use in large-scale WSN localization.
\section{Numerical Simulations}
\label{sec:simulations}
In this section, we first illustrate the relevance of matrix and graph bandwidth in large-scale WSN applications. Subsequently, the LB-EKF algorithm (Alg. \ref{alg:lbekf}) is simulated in conjunction with the vertex relabeling algorithm (Alg. \ref{alg:vr}) to demonstrate the efficacy of the proposed approach. Finally, it is verified that the bandwidth achieved by the vertex relabeling algorithm grows sublinearly with respect to the number of vertices on random geometric graphs.

\subsection{L-Banded Matrix Inversion}
To validate the use of L-banded inversion to approximate covariance matrices, we generated random $500 \times 500$ symmetric matrices of varying bandwidths; the non-diagonal elements were drawn uniform randomly from $(-1,1)$, the diagonal elements were set to $15$ and finally the matrix was normalized.
For each generated matrix, its inverse was approximated by a matrix of bandwidth $L$ using Algorithm \ref{alg:LBMI}, with $L$ ranging from $0$ to $50$. Figure \ref{band_inversion} shows the approximation error as a function of $L$, the bandwidth of the approximated matrix. For $L\gg 1$, the complexity of $L$-banded matrix inversion approaches that of general matrix inversion and the approximation error goes to 0. For smaller values of $L$, the approximation error decreases steeply until it is greater than the bandwidth of the matrix being inverted, illustrating that covariance matrices having a small bandwidth can be inverted efficiently with reasonable accuracy.
\begin{figure}[h]
\includegraphics[width=0.50\textwidth]{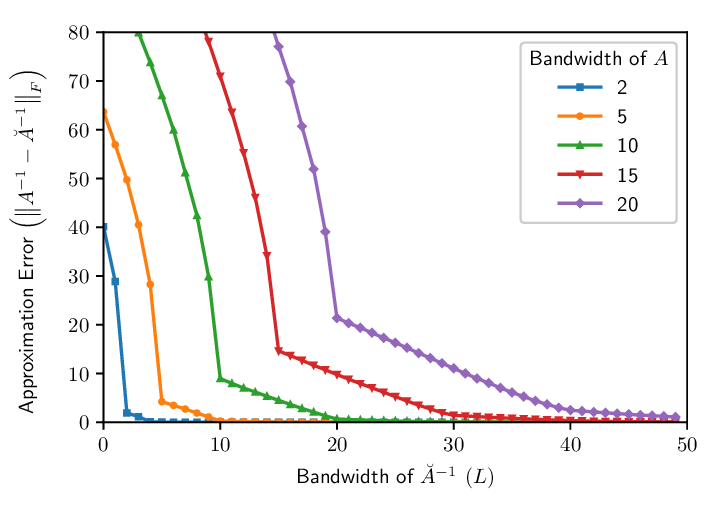}
\caption{The error (in Frobenius norm) in the approximation of the inverse of a $500\times 500$ covariance matrix $A^{-1}$ by another covariance matrix $\breve A^{-1}$ of a given bandwidth.}
\label{band_inversion}
\end{figure}

\subsection{LB-EKF with Vertex Relabeling} Next, we demonstrate the proposed WSN localization approach using the example of distance-based localization. Figure \ref{fig:network_init} depicts the initial configuration of $30$ agents in a $40m \times 40m$ domain. Of these, $8$ agents are beacons, i.e., they are able to measure their own position using GPS sensors with noise variance of $2 m^2$. All agents are able to measure their distance from their neighboring agents, with a variance of $10 m^2$. Thus, the agents follow the measurement model (\ref{eq:measurement}), with $\phi(x_i(k), x_j(k)) \coloneqq \|x_i(k) - x_j(k)\|$.
% FIGURE ------------ NETWORK 1 ------------------------
\begin{figure}[h]
\centering
\includegraphics[width=0.41\textwidth, trim={1.5cm 1.5cm 1.5cm 1.5cm}, clip]{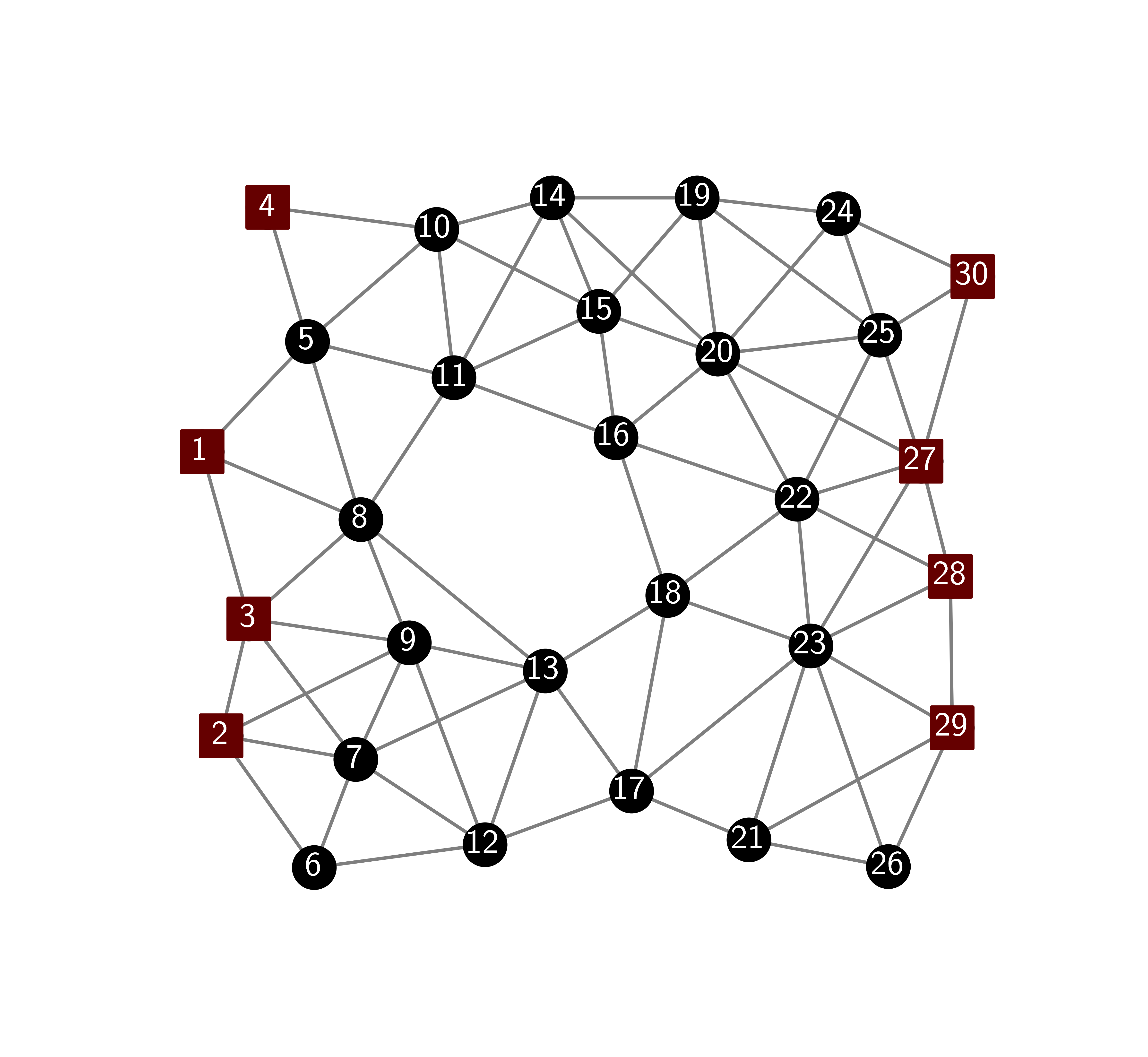}
\caption{Initial positions of the WSN; the red squares indicate the beacons (agents which can measure their own position).}
\label{fig:network_init}
\end{figure}
The sensing radius was chosen as $15m$, yielding the geometric graph shown in Fig. \ref{fig:network_init}. The vertex relabeling algorithm was used to reduce the bandwidth of the graph to $8$. The WSN was simulated for 100 timesteps using the LB-EKF algorithm with vertex relabeling (denoted as LB-EKF+VR), with an initial estimation error variance of $5m^2$ and process noise variance of $0.02 m^2$. Note that the dimension of the collective position vector of the WSN is $60$. We implemented LB-EKF+VR with $L=20$ as well as the original EKF algorithm (which is equivalent to LB-EKF with $L=59$), with the latter serving as a basis for comparison.
From (\ref{eq:M_update}), we see that EKF requires the inversion of $59 \times 59$ matrices, whereas from Algorithms \ref{alg:LBMI} and \ref{alg:lbekf} we know that LB-EKF only inverts $20\times 20$ matrices. For clarity of presentation, the labeling of the vertices is kept consistent between the plots.

% FIGURE ------------ NETWORK 2 ------------------------
\begin{figure}[h]
\centering
\includegraphics[width=0.45\textwidth]{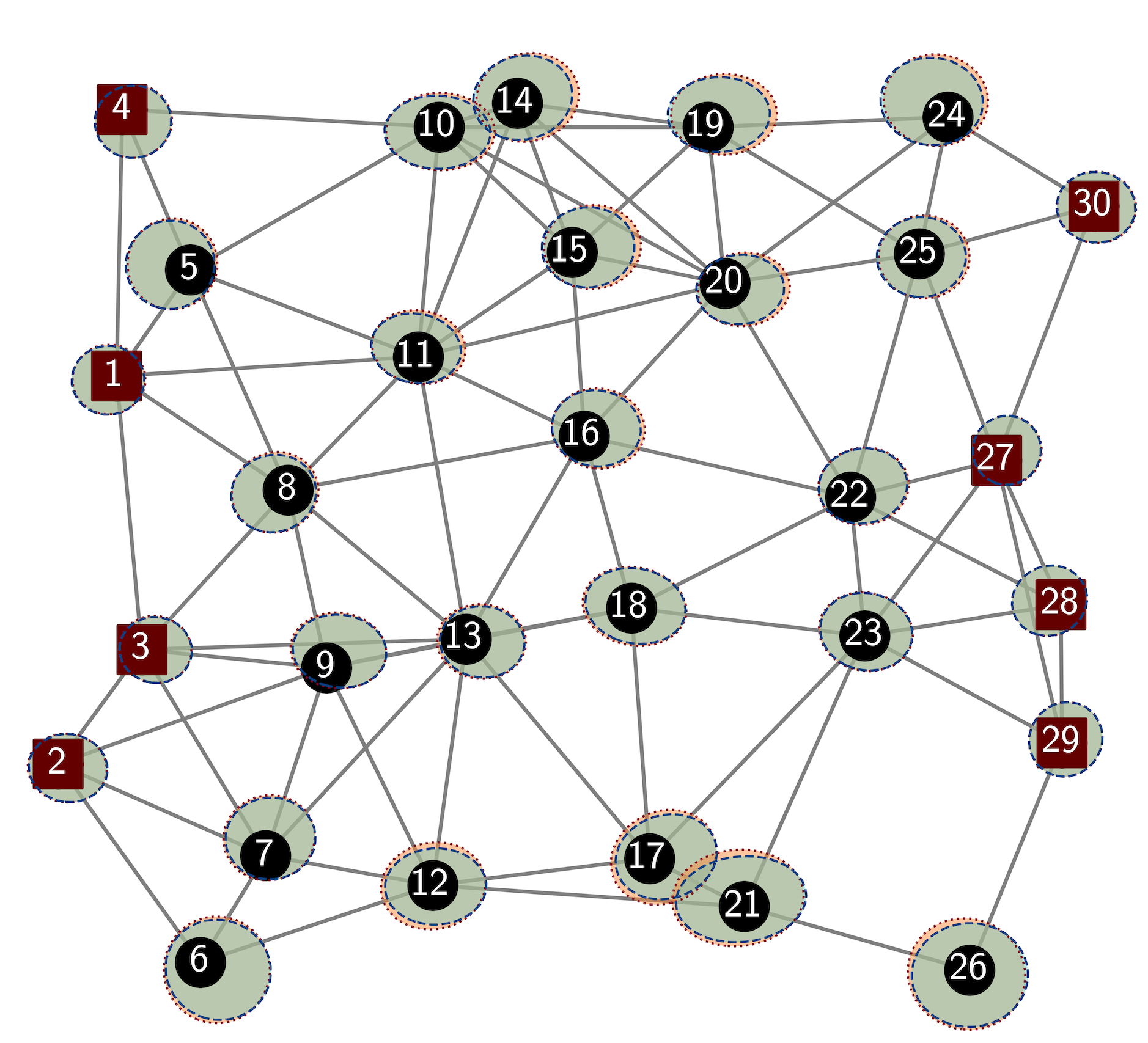}
\caption{Covariance ellipses of the estimates computed using LB-EKF+VR (blue shade) plotted over those of EKF (orange shade). The resulting green hue indicates that the ellipses of the two algorithms are overlapping.}
\label{fig:network_final}
\end{figure}
Fig. \ref{fig:network_final} depicts the \textit{covariance ellipses} of both algorithms at timestep 100, i.e., level sets of the form
\begin{equation*}
    \big(x - \hat x_i(k)\big)^\top \breve M_k^{(i)^{-1}}\big(x - \hat x_i(k)\big) \leq 20
\end{equation*}
where $\breve M_k^{(i)}$ is the $i^{th}$ diagonal block element of $\breve M_k$, and the number $20$ is chosen for visual emphasis. Note that the ellipses are centred around the estimated position vector of either algorithm at each agent.
The covariance ellipses allow us to visualize how well the banded error covariance $\breve M_k$ of the LB-EKF+VR algorithm approximates the true error covariance $M_k$ of the EKF algorithm. It can be seen that the ellipses are nearly coincident, as the parameter $L$ of LB-EKF+VR is large enough to capture the cross-covariances between agents. The estimates of LB-EKF+VR are slightly overconfident (have smaller ellipses) when compared to EKF, as LB-EKF effectively ignores the cross-covariances between agents that are far away from each other in terms of the labeling. 

The above experiment is repeated for $5000$ Monte Carlo trials. To demonstrate the importance of vertex relabeling, we implemented EKF, LB-EKF with vertex relabeling (LB-EKF+VR) as well as LB-EKF without vertex relabeling, i.e., using an arbitrary labeling of the vertices. Figure \ref{fig:mse_1} shows the mean squared error (MSE) of the individual agents, defined as
\begin{equation*}
    \textrm{MSE}_i(k) = \|x_i(k) - \hat x_i(k)\|^2
\end{equation*}
\begin{figure}[h]
\centering
\includegraphics[width=0.45\textwidth]{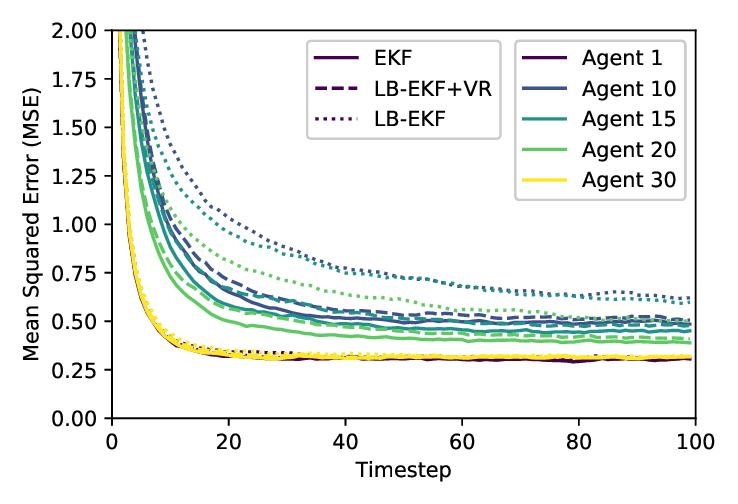}
\caption{MSE of the position estimates of the agents, computed using EKF, LB-EKF with vertex relabeling and LB-EKF without vertex relabeling, averaged over 5000 Monte Carlo trials.}
\label{fig:mse_1}
\end{figure}
\begin{figure}[h]
\centering
\includegraphics[width=0.42\textwidth]{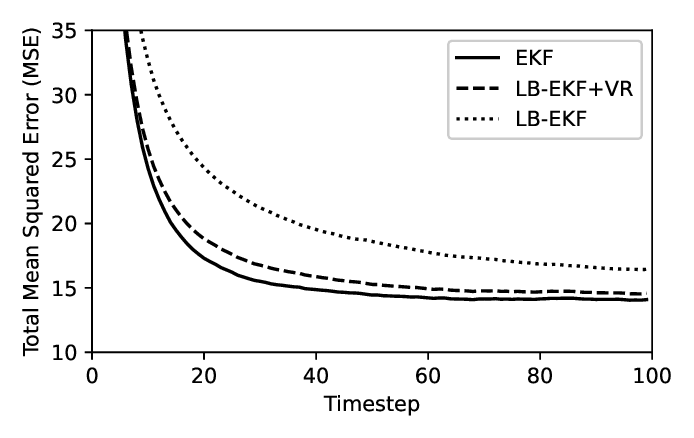}
\caption{Total MSE of all the agents in the WSN, computed using EKF, LB-EKF with vertex relabeling and LB-EKF without vertex relabeling, averaged over 5000 Monte Carlo trials.}
\label{fig:mse_2}
\end{figure}
Figure \ref{fig:mse_2} shows the total MSE of the network, $\sum _{i\in \mathcal V} \textrm{MSE}_i(k)$. While the three algorithms achieve comparable performance at the beacon agents (i.e., agents 1 and 30 in Fig. \ref{fig:mse_1}), the LB-EKF algorithm without vertex relabeling has worse performance at other agents. The MSE of LB-EKF+VR is close to that of EKF, in both the transient and steady-state segments of the simulation. This demonstrates that the vertex relabeling approach allows us to greatly reduce the computational burden of large-scale WSN localization while retaining reasonable estimation accuracy and convergence properties.

\subsection{Vertex Relabeling of Large-Scale WSNs}
Lastly, we use numerical simulations to validate that the upper bound of the bandwidth achieved by Algorithm \ref{alg:vr}, $\varphi _{\textit{max}}(\mathcal X^{(\lambda)}, r)$, grows sublinearly with respect to the number of vertices on random geometric graphs. To this end, we generate independent samples from the 1-dimensional uniform random distribution in the domain $[0,l]$, for varying values of $l$. From Property 1 of Lemma \ref{lem:poisson}, we know that the points generated in this manner correspond to the points generated by a 1-dimensional Poisson point process in $[0,l]$, conditioned on the total number of points.
Furthermore, from (\ref{1dPoisson}), we know that this is equivalent to generating points in a 3-dimensional (cubic) domain, since the vertex relabeling algorithm sorts the points based on their $\textrm{x}$-coordinate alone\footnote{The probability of generating two points with coincident $\textrm{x}$-coordinates is 0, so the algorithm does not encounter any ties.}.
Together, the foregoing observations allow us to compute the value of $\varphi _{\textit{max}}(\mathcal X^{(\lambda)}, r)$ for a large number of points in a $[0,l]^2$ domain efficiently.

The sensing radius $r$ was chosen as $15m$. 
The simulations were repeated for the rate parameters $\lambda =  0.005,\ 0.01,\ 0.02,\ 0.05 $ and $ 0.1$. To put these numbers into perspective, the expected number of agents in a $40m \times 40m$ domain is
$8$, $16$, $32$, $80$ and $160$, respectively, for the corresponding choices of $\lambda$. Thus, the chosen values of $\lambda$ capture both sparse and densely connected networks.

Figure \ref{fig:scan_statistic} shows the value of $\varphi _{\textit{max}}(\mathcal X^{(\lambda)}, r)$ plotted against the total number of agents in the graph. It can be noted that the bandwidth indeed grows very slowly compared to the maximal bandwidth of the graph, which grows linearly. The growth of the bandwidth slightly overshoots the predicted growth of (\ref{eq:sublinear_growth}), since the analysis leading to (\ref{eq:sublinear_growth}) ignored the $\textrm{max}(\cdot)$ operation involved in the computation of $\varphi _{\textit{max}}(\mathcal X^{(\lambda)}, r)$. The standard deviation of the achieved bound is quite small as well; this is because as the number of agents increases, the $\textrm{max}(\cdot)$ operation takes the maximum over a large number of agents, so that the law of large numbers comes into effect. Consequently, the vertex relabeling algorithm achieves a small bandwidth on typical WSN configurations with high probability.

% FIGURE ------------ SC.STATC. ------------------------
\begin{figure*}[t]
\centering
\includegraphics[width=0.90\textwidth]{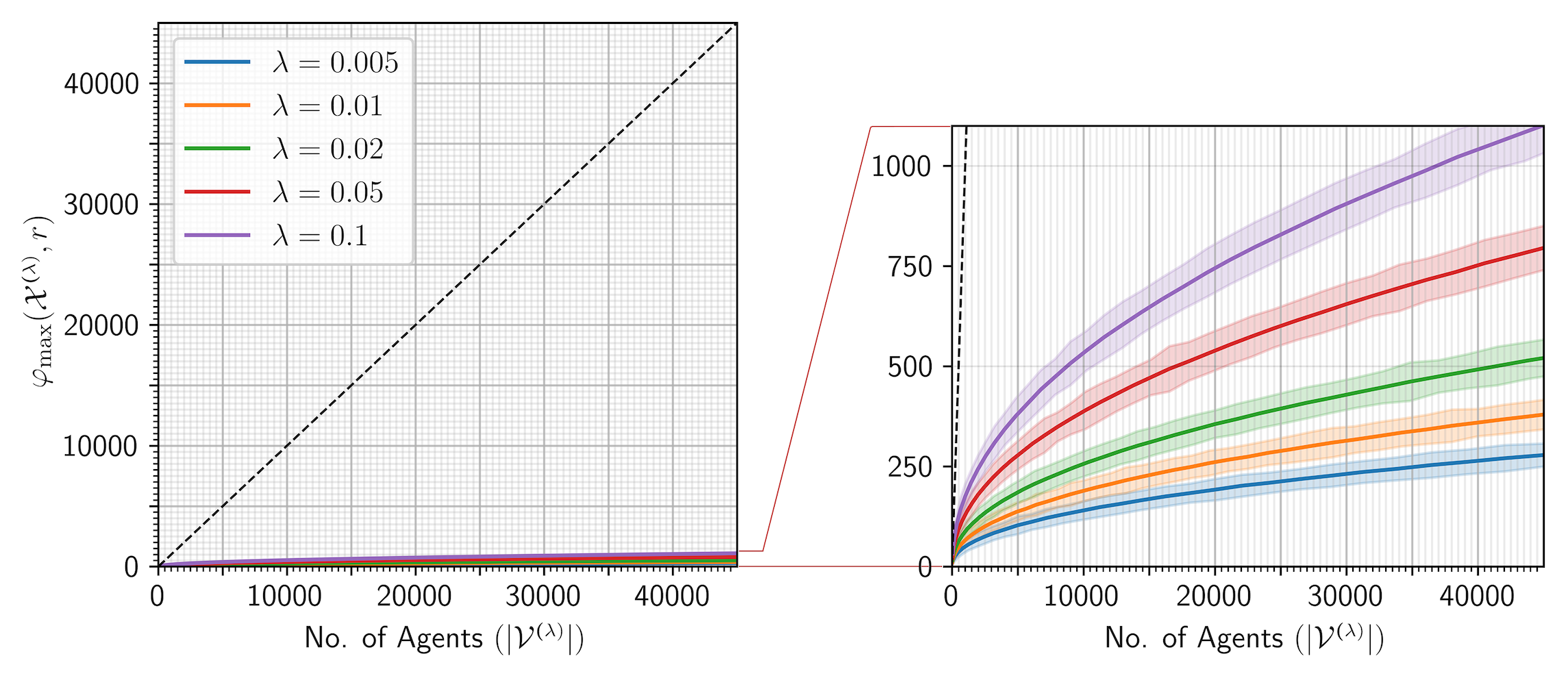}
\caption{Upper bound on the bandwidth achieved by Algorithm \ref{alg:vr} on random WSNs, computed using (\ref{eq:poisson_1d_bound}). The solid lines represent the means (conditioned on the number of agents), whereas the shaded regions represent 2 standard deviations about the means. The black dashed line shows the maximal bandwidth of the graph, which scales linearly in the number of agents.}
\label{fig:scan_statistic}
\end{figure*}

\section{Conclusions}
In this paper, a novel approach for efficient localization of large-scale wireless sensor networks (WSNs) was developed. The proposed method is based on the Extended Kalman Filter (EKF), supplemented with a novel vertex relabeling algorithm which reduces the bandwidth of the graph Laplacian, enabling one to exploit the inherent sparsity in the WSN localization problem. An upper bound on the computational complexity of the proposed algorithm was derived.
% (in terms of the upper bound on the bandwidth achieved by the vertex relabeling algorithm)
Numerical simulations were used to verify the efficacy of the proposed WSN localization approach.
We also considered the case of typical WSN configurations, modeled as random geometric graphs, showing that the vertex relabeling approach drastically reduces the computational complexity of large-scale WSN localization.

Some avenues for future research on this topic include the development of more sophisticated vertex relabeling algorithms, as well as the use of nonlinear state estimators (such as the Unscented Kalman Filter) in conjunction with sparse matrix methods.

\section*{Acknowledgements}
This research is funded by the Secure Systems Research Center (SSRC) at the Technology Innovation Institute (TII), UAE. The authors are grateful to Dr. Shreekant (Ticky) Thakkar and his team members at the SSRC for their valuable comments and support.
%It was shown that, since the interactions between the agents in a WSN are local in their nature, the WSN localization problem has an inherent sparsity which can be exploited 

% \appendix
% \subsection{L-Banded Matrix Inversion}
% \label{app:A}
% The L-banded matrix inversion algorithm, as given in \cite{kavcic2000matrices} and \cite{moura2008distributing}, is summarized here.
% Consider an $n\times n$ symmetric positive definite matrix $A = [A_{ij}]$. The main idea behind L-banded matrix inversion is that, given only the elements within the L-band of $A$, i.e., the entries $\lbrace A_{ij}\hspace{2pt}\big|\hspace{2pt}|i-j|\leq L\rbrace$, $A^{-1}$ has a unique L-banded approximation which is optimal in terms of Kullback-Leibler mean information as well as in a maximum entropy sense \cite{kavcic2000matrices, moura2008distributing}. This optimal approximation can be computed using Algorithm  \ref{alg:LBMI}, 

% The extension of Algorithm \ref{alg:LBMI} to block matrices may be found in \cite{mouraLblockInverse}, which may be used to approximate the inverses of banded block matrices, while ensuring that the approximated inverse is a block matrix as well. A distributed implementation of Algorithm \ref{alg:LBMI} was developed in \cite{moura2008distributing}.

\bibliographystyle{plain} % We choose the "plain" reference style
\bibliography{refs} % Entries are in the refs.bib file

\end{document}